\theoremstyle{plain}
\newtheorem{theorem}{Theorem}[section]
\newtheorem{prop}[theorem]{Proposition}
\newtheorem{algo}[theorem]{Algorithm}
\theoremstyle{definition}
\theoremstyle{remark}
\newtheorem{rem}[theorem]{Remark}
\newtheorem{exemple}[theorem]{Example}
\theoremstyle{plain} 
\newcommand{\bx}{\boldsymbol{x}}
\newcommand{\bxc}{\boldsymbol{x}^{\textsc{c}}}
\newcommand{\byc}{\boldsymbol{y}^{\textsc{c}}}
\newcommand{\bxd}{\boldsymbol{x}^{\textsc{d}}}
\newcommand{\by}{\boldsymbol{y}}
\newcommand{\bz}{\boldsymbol{z}}
\newcommand{\bu}{\boldsymbol{u}}
\newcommand{\bt}{\boldsymbol{\theta}}
\newcommand{\bal}{\boldsymbol{\alpha}}
\newcommand{\bpi}{\boldsymbol{\pi}}
\newcommand{\bbet}{\boldsymbol{\beta}}
\newcommand{\bG}{\boldsymbol{\Gamma}}
\newcommand{\bo}{\boldsymbol{0}}
\newcommand{\tx}{\boldsymbol{\mathrm{x}}}
\newcommand{\ty}{\boldsymbol{\mathrm{y}}}
\newcommand{\tz}{\boldsymbol{\mathrm{z}}}
\begin{document}



\title{Model-based clustering of Gaussian copulas for mixed data}

\author{
Matthieu Marbac
and Christophe Biernacki
and Vincent Vandewalle
}

\maketitle

\begin{abstract}
Clustering task of mixed data is a challenging problem. 
In a probabilistic framework, the main difficulty is due to a shortage of conventional distributions for such data.
In this paper, we propose to achieve the mixed data clustering with a Gaussian copula mixture model, since copulas, and in particular the Gaussian ones, are powerful tools for easily modelling the distribution of multivariate variables. 
Indeed, considering a mixing of continuous, integer and ordinal variables (thus all having a cumulative distribution function), this copula mixture model  defines intra-component dependencies similar to a Gaussian mixture, so with classical correlation meaning.
Simultaneously, it preserves standard margins associated to continuous, integer and ordered features, namely the Gaussian, the Poisson and the ordered multinomial distributions.
As an interesting by-product, the proposed mixture model generalizes many well-known ones and also provides tools of visualization based on the parameters.
At a practical level, the Bayesian inference is retained and it is achieved with a Metropolis-within-Gibbs sampler. Experiments on simulated and real data sets finally illustrate the expected advantages of the proposed model for mixed data: flexible and meaningful parametrization combined with visualization features.
\end{abstract}

\textbf{keywords}
Clustering; Gaussian copula; Metropolis-within-Gibbs algorithm; Mixed data; Mixture models; Visualization.

\section{Introduction}
Clustering is an efficient tool for managing large data sets, as it groups individuals into a few specific classes. In a probabilistic framework, a class groups together individuals arising from the same distribution. In such a framework, the most popular approaches model the data distribution using finite mixture models of parametric distributions \citep{Mcl00}. The literature consecrated to homogeneous data (composed of variables of the same type) is extensive and presents Gaussian mixture models \citep{Ban93}, multinomial mixture models  \citep{Goo74} and Poisson mixture models \citep{Kar08} as the standard models used to cluster such data sets. The success of these models is due to the use of conventional distributions for mixture components, thus allowing practitioners to easily interpret the components. However, even though many data sets contain mixed data (variables of different types), only a few multivariate distributions are available for such data sets. Moreover, these distributions can be difficult to interpret. We shall now present the three main models used to cluster mixed data. A more detailed overview is given by \citet{Hun11}.

\medskip
The \emph{locally independent mixture model} \citep{moustaki2005latent} analyzes data sets by assuming that all variables are independent given the component. It can provide meaningful results (see the applications of \citet{Lew98,Han01}), particularly when the one-dimensional marginal distributions of the components follow conventional distributions. However, this model can lead to severe bias when its main assumption is violated (see the application of \citet{Van09}). In such cases, two methods can be considered. The first method consists of selecting a subset of intra-class independent variables \citep{Mau09a}, but some information may be lost. The second method consists of using a model to relax the conditional independence assumption. We shall now present both methods.

\medskip
The \emph{location mixture model} \citep{Krz93,Wil99} assumes that continuous variables follow a multivariate Gaussian distribution conditionally on both component and categorical variables.
This model takes intra-component dependencies into account but requires too many parameters. It was therefore expanded by  \citet{Hun99} by grouping the variables into conditionally independent blocks, where a block comprises no more than one categorical variable and follows a location model. 
However,  the interpretation of components can be complex. Indeed, for a given component, the one-dimensional marginal distribution of the continuous variables are itself  Gaussian mixture models (not conventional distributions!). Moreover, the allocation of variables into blocks is a difficult problem that the authors achieve with an ascending method that is sub-optimal.

\medskip
The \emph{underlying variables mixture model} \citep{Eve88} analyzes data sets with continuous and ordinal variables. It assumes that each discrete variable arises from a latent continuous variable and that all continuous variables (observed and unobserved) follow a Gaussian mixture model. The distribution of observed variables is obtained by integrating each Gaussian component into the subset of latent variables. However, in practice, this computation is not feasible when there are more than two discrete variables. In an effort to study data sets with numerous binary variables,  \citet{Mor12} has expanded this model by estimating the scores of latent variables from those of binary variables. However, the interpretation of components is carried out for score-related parameters (not those related to observed variables).

\medskip
Previous models illustrate the difficulty of managing mixed variables with a model whose interpretation and inference are easy. Moreover, they do not the widespread case of integer variables. The main difficulty is due to a shortage of conventional distributions for mixed data. However, copulas are standard tools to define multivariate distributions in a systematic way, thus having good potentiality for providing a sensible answer.

\medskip
\emph{Copulas} \citep{Joe97,Nel99} can be used to build a multivariate model by defining, on the one hand, the \emph{one-dimensional marginal distributions}, and, on the other, the \emph{dependency model}. Recently, \citet{Smi12} and \citet{Mur13} have modelled the distribution of mixed variables using one Gaussian copula. As pointed-out by \citet{Pit06}, the maximum likelihood inference is very difficult for a Gaussian copula with discrete margins. So, it is often replaced by the \emph{Inference Function for Margins} method performing the inference in two steps (see Chapter 10 of \citet{Joe97}) but which is sub-optimal. When all the variables are continuous, the fixed-point-based algorithm proposed by \citet{Son05} achieves the maximum likelihood estimation, but this approach is not doable for mixed data. Therefore, as shown by \citet{Smi12}, it is more convenient to work in a Bayesian framework since it simplifies the inference by using the latent structure of the model.

\medskip
In this paper, we propose to achieve the mixed data clustering with a \emph{Gaussian copula mixture model}. This new model assumes that each component follow a Gaussian copula \citep{Hof07,Hof11}. Thus, the \emph{intra-component dependencies} are taken into account and are analyzed from the intra-component correlation matrices. Taking advantage of the properties of the copulas, the Gaussian copula mixture model provides \emph{conventional distributions} for all one-dimensional marginal distributions of each component. It results a \emph{three-level framework} which permits a user-friendly interpretation, \emph{i.e.} the proportions indicate the component weights, the one-dimensional marginal parameters of each component roughly describe the classes, and the correlation matrices refine this description. Finally, using the continuous latent structure of the Gaussian copulas, a visualisation tool based on a Principal Component Analysis (PCA) per component is proposed. This visualization provides a summary of main intra-component dependencies and a scatterplot of individuals according to component parameters.

\medskip
This paper is organized as follows. Section~\ref{model} introduces the Gaussian copula mixture model and its links to well-known models. Section~\ref{estim} presents the Metropolis-within-Gibbs algorithm used to perform Bayesian inference. Section~\ref{sim} illustrates algorithm behavior and model robustness through numerical experiments. Section~\ref{app} presents two applications of the new model by clustering two real data sets. Section~\ref{conclusion} concludes the study.

\section{Mixture model of Gaussian copulas} \label{model}
\subsection{Finite mixture model}
The vector $\bx=(x^1,\ldots,x^e) \in \mathbb{R}^c\times\mathcal{X}$ denotes the $e=c+d$ observed variables. Its first $c$ elements are denoted by $\bxc$ and correspond to the subset of the continuous variables defined on the space $\mathbb{R}^c$. Its last $d$ elements are denoted by $\bxd$  and correspond the subset of the discrete variables (integer, ordinal or binary) defined on the space $\mathcal{X}$. Note that if $x^j$ is an ordinal variable with $m_j$ modalities, then it uses a numeric coding $\{1,\ldots,m_j\}$.

Data $\bx$ are assumed to arise from the mixture model of $g$ parametric distributions whose the probability distribution function (pdf) is written as follows
\begin{equation}
p(\bx|\bt)=\sum_{k=1}^g\pi_k p(\bx|\bal_k), \label{mixture}
\end{equation}
where $\bt=(\bpi,\bal)$ denotes the whole parameters. Vector $\bpi=(\pi_1,\ldots,\pi_g)$ is defined on the simplex of size $g$ and groups together the component proportions, where $\pi_k$ is the proportion of component $k$. Vector $\bal=(\bal_1,\ldots,\bal_g)$ groups together the component parameters, where $\bal_k$ denotes the parameters of component $k$.

\subsection{Component modelled by a Gaussian copula}
The Gaussian copula mixture model assumes that each component follows a Gaussian copula.
Thus, component $k$ is parametrized by the correlation matrix $\bG_k$  of size $e\times e$ and by the parameters of the one-dimensional margin distributions $\bbet_k=(\bbet_{k1},\ldots,\bbet_{ke})$ where $\bbet_{kj}$ denotes the parameters of the one-dimensional marginal distribution $j$. 
By grouping all the parameters of component $k$ in $\bal_k=(\bG_k,\bbet_k)$, the cumulative distribution function (cdf) of component $k$ is written as
\begin{equation}
P(\bx|\bal_k)=\Phi_{e}(\Phi_1^{-1}(u^1_k),\ldots,\Phi_1^{-1}(u^e_k)|\bo,\bG_k), \label{cdf}
\end{equation}
where $u_k^j=P(x^j|\bbet_{kj})$ is the value of the cdf of the one-dimensional marginal distribution of variable $j$ for component $k$ evaluated at $x^j$, where $\Phi_e(.|\bo,\bG_k)$ is the cdf of the $e$-variate centred Gaussian distribution with correlation matrix $\bG_k$ and where $\Phi_1^{-1}(.)$  is the inverse cumulative distribution function of the standard univariate Gaussian $\mathcal{N}_1(0,1)$. 

In cluster analysis, the categorical variable $z\in\{1,\ldots,g\}$ which indicates the individual's component membership is unobserved. Moreover, it follows the multinomial distribution $\mathcal{M}_g(\pi_1,\ldots,\pi_g)$. Therefore, mixture models are often interpreted as the marginal distribution of $\bx$ based on the distribution of the variable pair $(\bx,z)$. The Gaussian copula mixture model involves a second latent variable  $\by=(y^1,\ldots,y^e)\in\mathbb{R}^e$ such as $\by|z=k $ follows an $e$-variate centred Gaussian distribution $\mathcal{N}_e(\bo,\bG_k)$. Thus, this model can be interpreted as the marginal distribution of $\bx$ based on the distribution of the variable triplet $(\bx,\by,z)$. Conditionally on $(\by,z=k)$, $\bx$ is defined by
\begin{equation}
x^j=P^{-1}(\Phi_1(y^j)|\bbet_{kj}), \; \forall j=1,\ldots,e. \label{determine}
\end{equation}
Thus, the generative model of the Gaussian copula mixture model is written as
\begin{itemize}
\item Class membership \emph{sampling}: $z \sim \mathcal{M}_g(\pi_1,\ldots,\pi_g)$,
\item Gaussian copula \emph{sampling}: $\by|z=k \sim \mathcal{N}_e(\bo,\bG_k)$,
\item Observed data \emph{deterministic computation}: $\bx$ is obtained from \eqref{determine}.
\end{itemize}

\subsection{Specific distributions for mixed-type variables}
We remind that the first $c$ variables of $\bx$ are continuous and denoted by $\bxc$ while the last $d$ variables are discrete variables $\bxd$ (integer and ordinal). So, the 
pdf of component $k$ can be decomposed as
\begin{equation}
 p(\bx|\bal_k)= p(\bxc|\bal_k) \times p(\bxd|\bxc,\bal_k). \label{decompo}
\end{equation}
The expression of the pdf of component $k$ can be deduced from the cdf of component $k$ defined by \eqref{cdf} only when the one-dimensional marginal distributions of component $k$ (\emph{i.e.} distribution of $x^j|z=k$) are set. For facilitating the model interpretation, we consider that $x^j|z=k$ follows a conventional distribution. More precisely,  $x^j|z=k$ follows a \emph{Gaussian} (respectively \emph{Poisson} and \emph{ordered multinomial}) distribution if variable $x^j$ is continuous (respectively integer and ordinal). We have $\beta_{kj}=(\mu_{kj},\sigma_{kj})$ when $x^j|z=k$ follows the Gaussian distribution of mean $\mu_{kj}$ and variance $\sigma_{kj}^2$, while $\beta_{kj}\in\mathbb{R}^{+*}$ when $x^j$ is integer and $\beta_{kj}$ is defined on the simplex of size $m_j$ when $x^j$ is ordinal.
Even if the marginal distribution of an ordinal variable does not take into account of the order between modalities, this order is crucial since it allows to define the marginal cdf of the variable (it corresponds to what we previously called the ``ordered multinomial''). Obviously, the Gaussian copula mixture model cannot manage data set with categorical variables due to the lack of order between modalities.

Conditionally on $z=k$ and $\beta_{kj}$, the knowledge of variable $x^j$ fully determines the variable $y^j$ when $x^j$ is continuous by $y^j=\frac{x^j-\mu_{kj}}{\sigma_{kj}}$. When $x^j$ is discrete, its knowledge only determines an interval $\mathcal{S}_k^j(x^j)=]b_k^{\ominus}(x^j),b_k^{\oplus}(x^j)]$ where $y^j$ provides the same observed variable $x^j$ by \eqref{determine} with  $b_k^{\ominus}(x^j)=\Phi_1^{-1}(P(x^j-1|\bbet_{kj}))$ and $b_k^{\oplus}(x^j)=\Phi_1^{-1}(P(x^j|\bbet_{kj}))$.
Thus, the pdf of component $k$ is written as follows (in the same order as Equation~\eqref{decompo})
\begin{equation}
 p(\bx|\bal_k)
=\frac{\phi_{c}(\byc|\bo,\bG_{k\textsc{c}\textsc{c}})}{ \prod_{j=1}^{c} \sigma_{kj} }
\times 
\int_{\mathcal{S}_k(\bxd)}\phi_{d}(\bu|\boldsymbol{\mu}_{k}^{\textsc{d}},\boldsymbol{\Sigma}_{k}^{\textsc{d}}) d\bu
, \label{composante}
\end{equation}
where:
\begin{itemize}
\item the correlation matrix
$\boldsymbol{\Gamma}_{k}=\begin{bmatrix}
\boldsymbol{\Gamma}_{k\textsc{c}\textsc{c}} & \boldsymbol{\Gamma}_{k\textsc{c}\textsc{d}} \\ 
\boldsymbol{\Gamma}_{k\textsc{d}\textsc{c}} & \boldsymbol{\Gamma}_{k\textsc{d}\textsc{d}}
\end{bmatrix} $ is decomposed into sub-matrices, for instance
$\boldsymbol{\Gamma}_{k\textsc{c}\textsc{c}}$ is the sub-matrix of $\boldsymbol{\Gamma}_{k}$ composed by the rows and the columns related to the observed continuous variables,
\item $\byc=\big(\frac{x^j-\mu_{kj}}{\sigma_{kj}};j=1,\ldots,c\big)$, $\phi_{c}(\byc|\bo,\bG_{k\textsc{c}\textsc{c}})$ denotes the pdf of $c$-variate centred Gaussian distribution with correlation matrix $\bG_{k\textsc{c}\textsc{c}}$,
\item $\mathcal{S}_k(\bxd)=\mathcal{S}_k^{c+1}(x^{c+1})\times\ldots\times\mathcal{S}_k^{e}(x^e)$, $\phi_{d}(.|\boldsymbol{\mu}_{k}^{\textsc{d}},\boldsymbol{\Sigma}_{k}^{\textsc{d}})$ denotes the pdf of a $d$-variate Gaussian distribution with mean $\boldsymbol{\mu}_{k}^{\textsc{d}}$ and covariance matrix $\boldsymbol{\Sigma}_{k}^{\textsc{d}}$,     $\boldsymbol{\mu}_{k}^{\textsc{d}}=\boldsymbol{\Gamma}_{k\textsc{d}\textsc{c}}\boldsymbol{\Gamma}_{k\textsc{c}\textsc{c}}^{-1}\Psi(\bxc;\bal_k)$ is the conditional mean of $\by^{\textsc{d}}$ 
 and $\boldsymbol{\Sigma}_{k}^{\textsc{d}}=\boldsymbol{\Gamma}_{k\textsc{d}\textsc{d}} - \boldsymbol{\Gamma}_{k\textsc{d}\textsc{c}}\boldsymbol{\Gamma}_{k\textsc{c}\textsc{c}}^{-1}\boldsymbol{\Gamma}_{k\textsc{c}\textsc{d}}$ is its conditional covariance matrix.
\end{itemize}

\begin{rem}[Model identifiability]
The Gaussian copula mixture model is identifiable if, at least, one variable is continuous or integer (see Appendix~\ref{identifiability}).
\end{rem}

\subsection{Strengths of the Gaussian copula mixture model} \label{strengths}
\subsubsection{Related models}
The Gaussian copula mixture model can be used to generalize many conventional mixture models, including the four cases mentioned below.
\begin{itemize}
\item If the correlation matrices are diagonal (\emph{i.e.} $\bG_k=\boldsymbol{I}$, $\forall k=1,\ldots,g$), then the model is equivalent to the locally independent mixture model.
\item If all the variables are continuous (\emph{i.e.} $c=e$ and $d=0$), then the model  is equivalent to the Gaussian mixture model without constraints among parameters \citep{Ban93}. In the spirit of the Gaussian homescedastic model, we also propose a parsimonious version of the  Gaussian copula mixture model by assuming the equality between the correlation matrices. This model is named   \emph{homoscedastic} since the covariance matrices of the latent Gaussian variables are equal between components (\emph{i.e.} $\bG_1=\ldots=\bG_g$). The free correlation model will be now called \emph{heteroscedastic} model).
\item The model is linked to the binned Gaussian mixture model. For example, when variables are ordinal, it is equivalent to the mixture model presented by \citet{Gou06}. In such cases, the model is stable through fusion of modalities. 
\item If the variables are both continuous and ordinal, then the  model is a new parametrization of the model proposed by \citet{Eve88}. It should be noted that Everitt directly estimates the space $\mathcal{S}_k(\bxd)$ containing the antecedents of $\bxd$. Moreover, he uses a simplex algorithm to perform maximum likelihood inference, but this method dramatically limits the number of ordinal variables. The new parametrization of the proposed mixture allows to directly estimate the one-dimensional marginal parameters $\bbet_{kj}$ of each component (see details in Section~\ref{estim}), while the parametrization of Everitt implies the difficult estimation of the space $\mathcal{S}_k(\bxd)$. Thus, the parameter inference is easier.
\end{itemize}

\subsubsection{Standardized coefficient of correlation per class}
The Gaussian copula provides a user-friendly correlation coefficient for each pair of variables. Indeed, when both variables are continuous, it is equal to the upper boundary of the correlation coefficients obtained by monotonic transformation of the variables \citep{Kla97}. Furthermore, when both variables are discrete, it is equal to the polychoric correlation coefficient \citep{Ols79}.

\subsubsection{Data visualization per component: a by-product of Gaussian copulas}
By using the latent vectors of the Gaussian copulas $\by|z$, a PCA-type method allows \emph{visualization} of the individuals \emph{per component} which permits an  identification of main intra-component dependencies. The visualization of component $k$ is performed by computing the coordinates $\mathbb{E}[\by|\bx,z =k;\bal_k]$ and then projecting them onto the PCA region associated with the Gaussian copula of component $k$. This space is obtained directly through spectral decomposition of $\bG_k$. The individuals arising from component $k$ follow a centred Gaussian distribution on this factorial map. Those arising from another component have an expectation not equal to zero. Therefore, an individual located far away from the origin arises from a distribution significantly different from the distribution of component $k$. Finally, the correlation circle summarizes intra-component correlations and avoids direct interpretation of the correlation matrix, which can be fastidious if $e$ is large. The following example illustrates these properties.

\begin{exemple}\label{running_exemple}
Let one continuous, one integer and one binary variables arising, in this order, from the bi-component  Gaussian copula mixture model parametrized by
$$\bpi=(0.5,0.5), \; \bbet_{11}=(-2,1),\; \bbet_{12}=5,\; \bbet_{13}=\bbet_{23}=(0.5,0.5),\;  \bbet_{21}=(2,1),
$$
$$\bbet_{22}=15, \; \bG_{1}=\begin{pmatrix}
1 & -0.4 & 0.4 \\
-0.4 & 1 & 0.4 \\
0.4 & 0.4 & 1\\ \end{pmatrix}\ \text{ and }\bG_{2}=\begin{pmatrix}
1 & 0.8 & 0.1 \\
0.8 & 1 & 0.1 \\
0.1 & 0.1 & 1\\
\end{pmatrix}.$$

\begin{figure}[!ht]
  \centering
  \subfigure[Scatterplot of the individuals described by three variables: one continuous (abscissa), one integer (ordinate) and one binary (symbol). Colors indicate the component memberships]{\label{exemple_native}\includegraphics[scale=0.4]{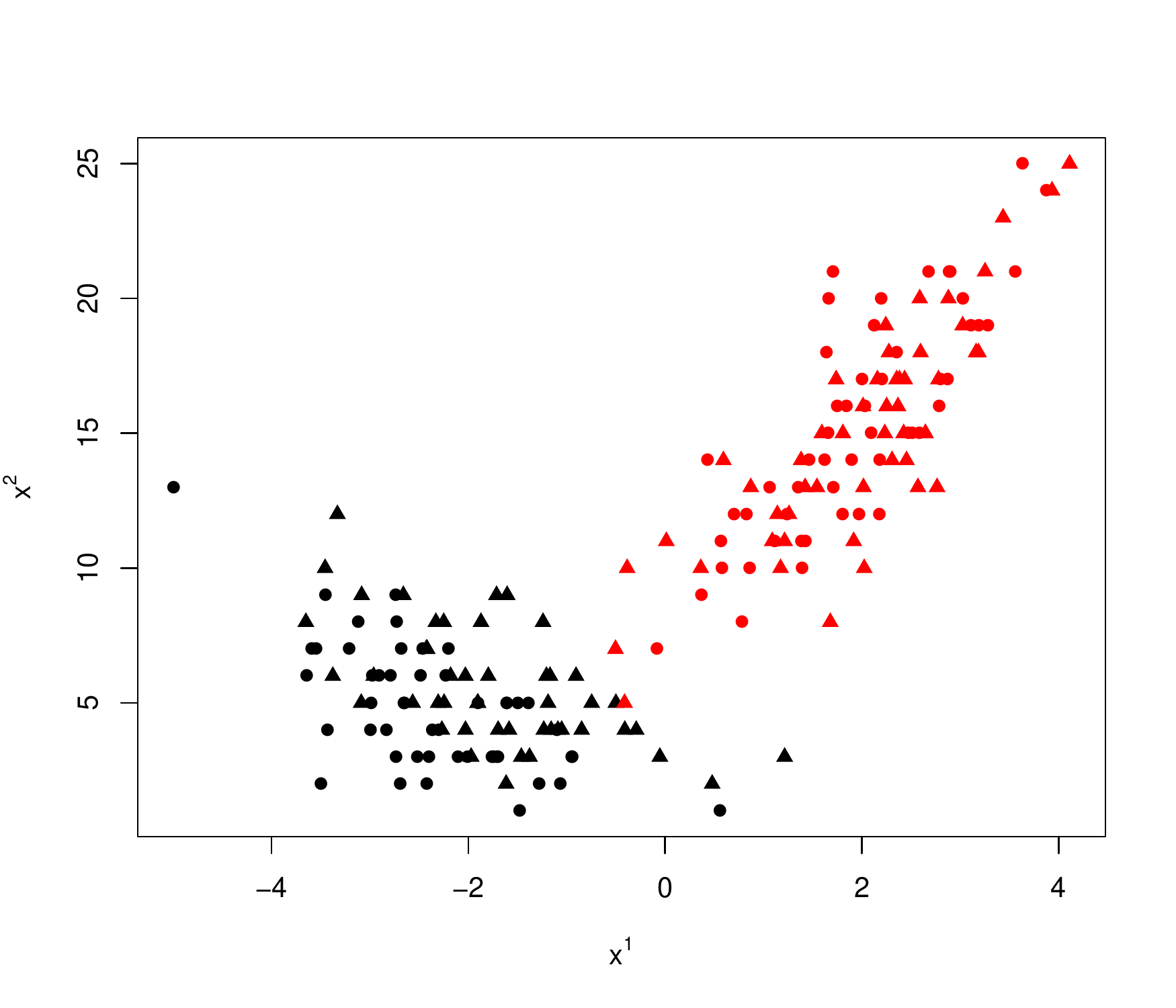} }                \\
  \subfigure[Scatterplot of the individuals in the first factorial map of component~2. Colors and symbols indicate the component memberships]{\label{exemple_acp}\includegraphics[scale=0.4]{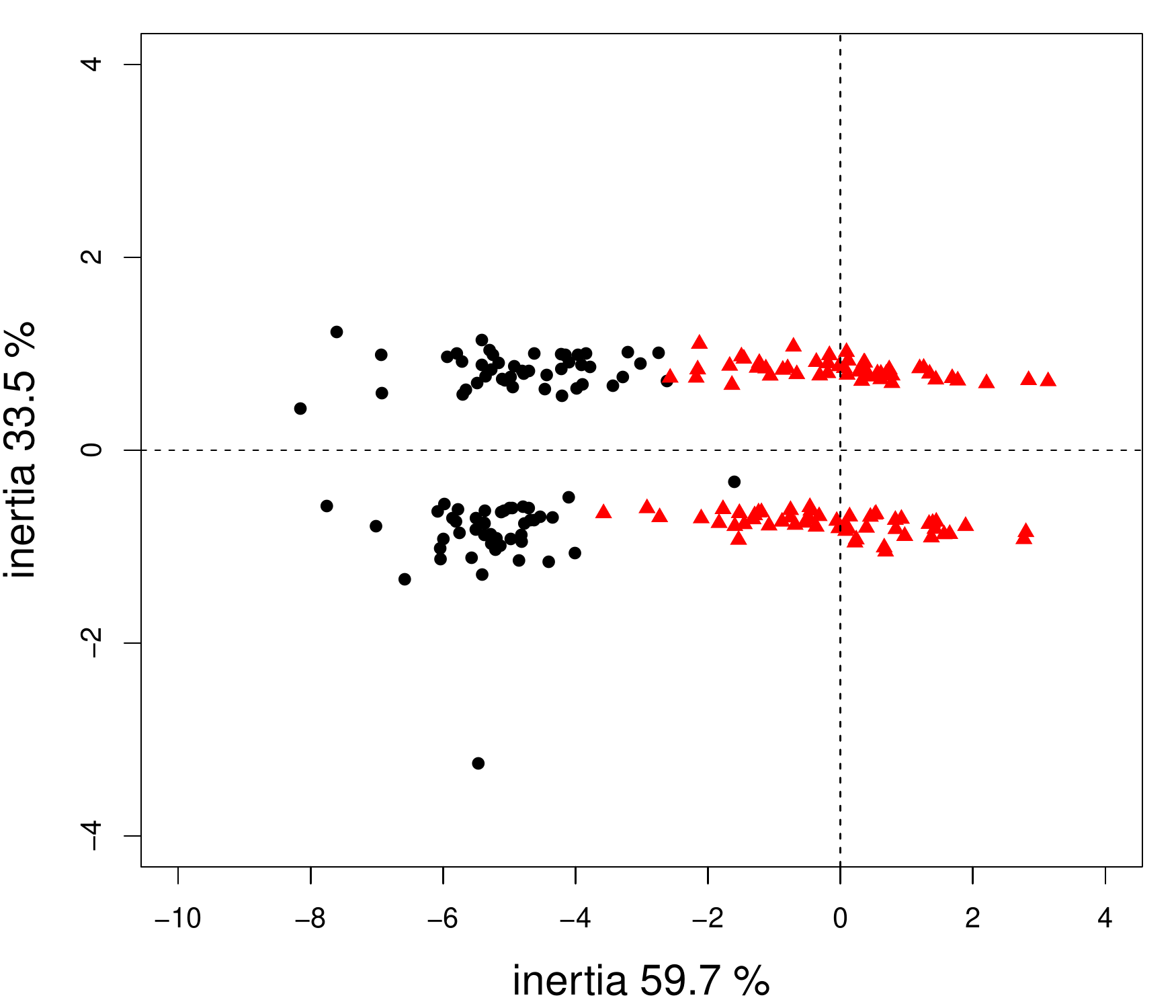} }
  \subfigure[representation of the variables in the first factorial map of component~2]{\label{exemple_acp2}\includegraphics[scale=0.4]{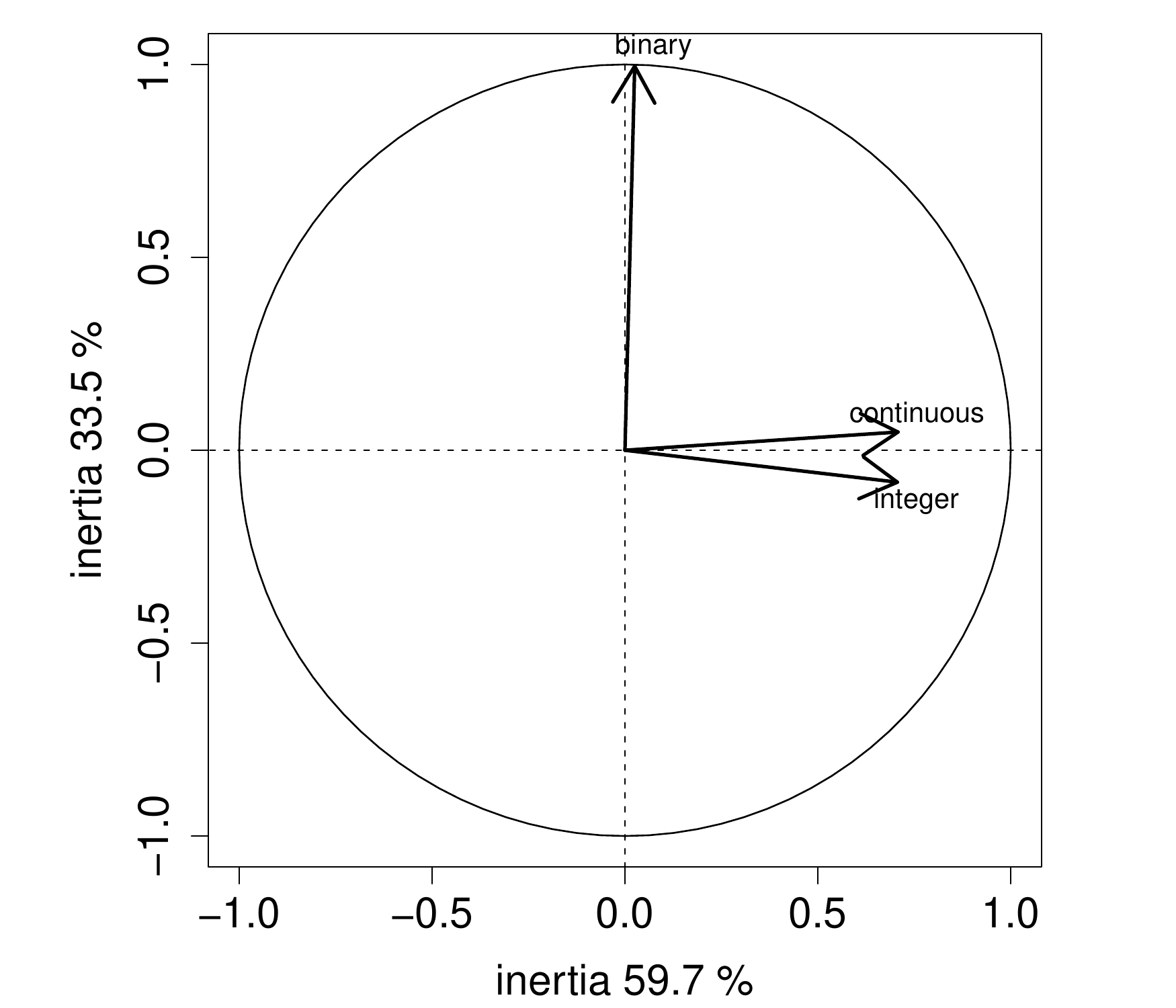} }
  \caption{Example of data visualization.}\label{visu}
\end{figure}
Figure~\ref{visu} provides an example of the data visualization.
Figure~\ref{exemple_native} shows the scatterplot of the individuals in their native space.
Figure~\ref{exemple_acp} presents the scatterplot of the individuals in the first PCA-map of the second component (red). It permits to easily distinguish two classes: a centred one (red) and a second one (black) located on the left side. More precisely, the first axis (explained by the continuous and the integer variables) is strongly discriminative  while the second axis (exclusively explained by the binary variable) is not discriminative. Figure~\ref{exemple_acp2} shows the correlation circle of the first PCA-map of the red component. It allows to point-out a strong correlation, for the red component, between the continuous and the integer variables.
\end{exemple}

\section{Bayesian inference} \label{estim}
\subsection{Sampling layout on data and parameters}
We observe the sample $\tx=(\bx_1,\ldots,\bx_n)$ composed of $n$ independent individuals $\bx_i\in\mathbb{R}^{c}\times \mathcal{X}$ assumed to arise from a Gaussian copula mixture model. As pointed-out by \citet{Smi12}, the Bayesian framework considerably simplifies the inference since it uses the latent structure of the model $(\by,z)$.

Without prior information about the data, we assume independence between the prior distributions.
The proportions and the parameters of the one-dimensional marginal distributions of each component $\bbet_{kj}$ follow the classical conjugate prior distributions \citep{Rob07}. Finally, the conjugate prior of the covariance matrices is derived from an Inverse Wishart distribution as proposed by \citep{Hof07}. Details on the prior distributions are given in Appendix~\ref{prior_margin}.

\subsection{Gibbs and Metropolis-within-Gibbs samplers}
The Bayesian estimation is managed by a Gibbs sampler (described in Algorithm~\ref{Gibbs}) which is the most popular approach to infer mixture models since it uses the latent structure of the data. Its stationary distribution is $p(\bt,\ty,\tz|\tx)$ where $\tz=(z_1,\ldots,z_n)$ denotes the class memberships of $\tx$  and  where $\ty=(\by_1,\ldots,\by_n)$ denotes the Gaussian vector related to $\tx$. Thus, the sequence of the generated parameters is sampled from the marginal posterior distribution $p(\bt|\tx)$. 

Note that the Gaussian variable $\ty$ is twice sampled during one iteration of the algorithm to manage the strong dependencies between $\ty$ and  $\tz$, and between $\by_{[rk]}^j=\{y_{i}^j: z_i^{(r)}=k\}$ and $\bbet_{kj}$. Obviously, the stationary distribution stays unchanged.

\begin{algo}[The Gibbs sampler] \label{Gibbs} Starting from an initial value $\bt^{(0)}$, its iteration $(r)$ consists in the following four steps
($k\in\{1,\ldots,g\}, j \in \{1,\ldots,e\})$ 
\begin{align}
\tz^{(r)},\ty^{(r-1/2)} &\sim \tz,\ty|\tx,\bt^{(r-1)} \label{latent_variables}\\
 \quad \bbet_{kj}^{(r)},\by_{[rk]}^{j(r)} &\sim \bbet_{kj},\by_{[rk]}^j|\tx,\by^{\bar{\jmath}(r)}_{[rk]},\tz^{(r)},\bbet_{k\bar{\jmath}}^{(r)},\bG_k^{(r-1)} \label{margins}\\
\bpi^{(r)} &\sim \bpi|\tz^{(r)} \label{proportions}\\
\bG_k^{(r)} &\sim \bG_k | \ty^{(r)},\tz^{(r)} \label{correlations}
\end{align}
where $\by_{[rk]}=\by_{\{i:z_i^{(r)}=k\}}$, $\by^{\bar{\jmath}(r)}_i=(y^{1(r)}_i,\ldots,y^{j-1(r)}_i,y^{j+1(r-1/2)}_i,\ldots,y^{e(r-1/2)}_i)$ and
$\bbet_{k\bar{\jmath}}^{(r)}=(\bbet_{k1}^{(r)},\ldots,$ $\bbet_{kj-1}^{(r)},\bbet_{kj+1}^{(r-1)},\ldots,\bbet_{ke}^{(r-1)})$. 
\end{algo}

The sampling according to \eqref{proportions} and \eqref{correlations} are classical. However, both samplings from \eqref{latent_variables} and \eqref{margins} are difficulty performed. Therefore, they are replaced by one iteration of a Metropolis-Hastings algorithm that does not change the stationary distribution. The resulting algorithm is a Metropolis-within-Gibbs sampler whose the properties are detailed by \citet{Rob04}.
Details about the four steps of the Metropolis-within-Gibbs sampler are given in Appendix~\ref{Metropolis-within-Gibbs sampler}.

By taking the mean of the parameters generated by the Metropolis-within-Gibbs sampler, we obtain a consistent estimate of $\bt$.

\subsection{Label switching problem}
The label switching problem is generally solved by specific procedures \citep{Ste00b}. However, based on the argument of \citet{Jac14}, these techniques are principally impacting when $g$ is known.

However, when the model is used to cluster, the number of classes is unknown, and the model selection is performed by the BIC criterion \citep{Sch78} which simultaneously avoids the label switching phenomenon. Indeed, on the one hand, this criterion selects quite separated classes when the sample size is small, so the label switching is not present (with high probability) in practice because of the class separability. On the other hand, even if it can select more classes when the sample size increases, the label switching problem does not occurred since this phenomenon vanishes asymptotically.

Obviously, when the number of classes is fixed and the size of sample is small, the label switching problem can occur. In such a case, our advice is naturally to use the procedures of \citet{Ste00b}.

\section{Simulations} \label{sim}
In this section, two simulations are used to illustrate the new model. The first simulation shows the relevance of the estimation procedure.  The second simulation illustrates the robustness of the proposed model by analyzing data sampled from a mixture of Poisson distributions.

\paragraph*{Experiment conditions} 
For each situation, 100 samples are generated. Parameters are estimated by taking the mean of the parameters sampled by $10^3$ iteration of Algorithm~\ref{Gibbs} after a burn-in period of $10^2$ iterations. Algorithm~\ref{Gibbs} is initialized with the maximum likelihood estimator of the locally independent model (particularly relevant when intra-class dependencies are small). The Kullback-Leibler divergence \citep{kullback1951information} is used to compare the estimated distribution and the distribution used to sample the data. This divergence is approximated via $10^4$ iterations of a Monte Carlo simulation.

\subsection{Estimation efficiency}
Data sets are composed of one continuous variable, one integer variable and one binary variable. They are sampled from the distribution described in the example of Section~\ref{strengths}.
Results are presented by Figure~\ref{simul3}
According to Figure~\ref{simul3kul}, the estimated distribution converges to the true distribution when the sample size increases. Indeed, the Kullback-Leibler divergence of the estimated model from the true model decreases as a function of sample size and converges to zero. Moreover, as shown by Figure~\ref{simul3error}, the misclassification error rate converges to the theoretical misclassification error rate (equal to 0.005) when $n$ increases. This simulation illustrates the convergence of the estimator computed by averaging the parameters sampled using the algorithm. Finally, the estimation procedure is not too much time consuming since samples of size $n=100$ and $n=1600$ require respectively $15$ and $64$ seconds, on an Intel Core i5-3320M processor.

\begin{figure}[!ht]
  \centering
  \subfigure[Kullback-Leibler divergence of the estimated model from the true model.]{\includegraphics[scale=0.4]{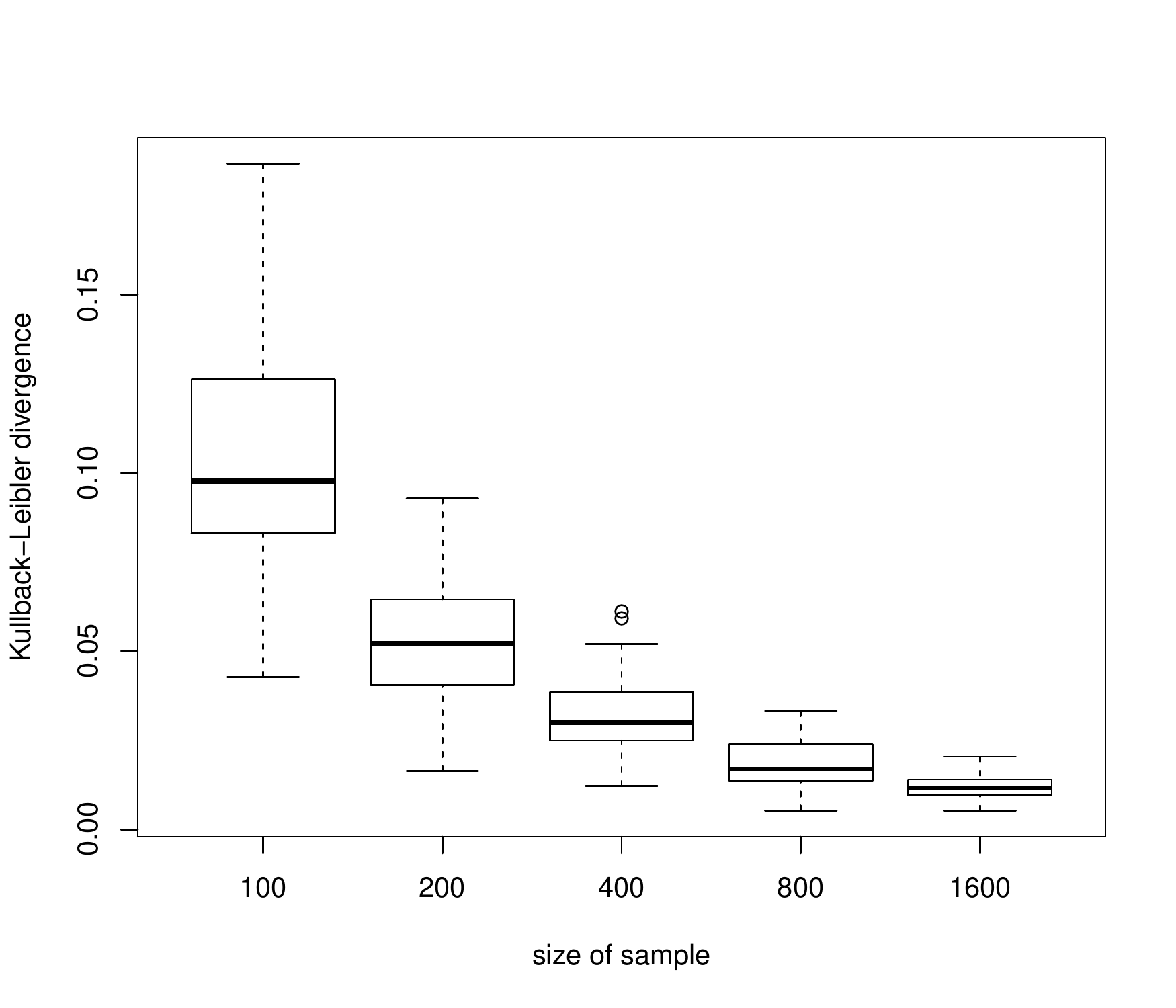} \label{simul3kul}}
  \subfigure[Misclassification error rate.]{\includegraphics[scale=0.4]{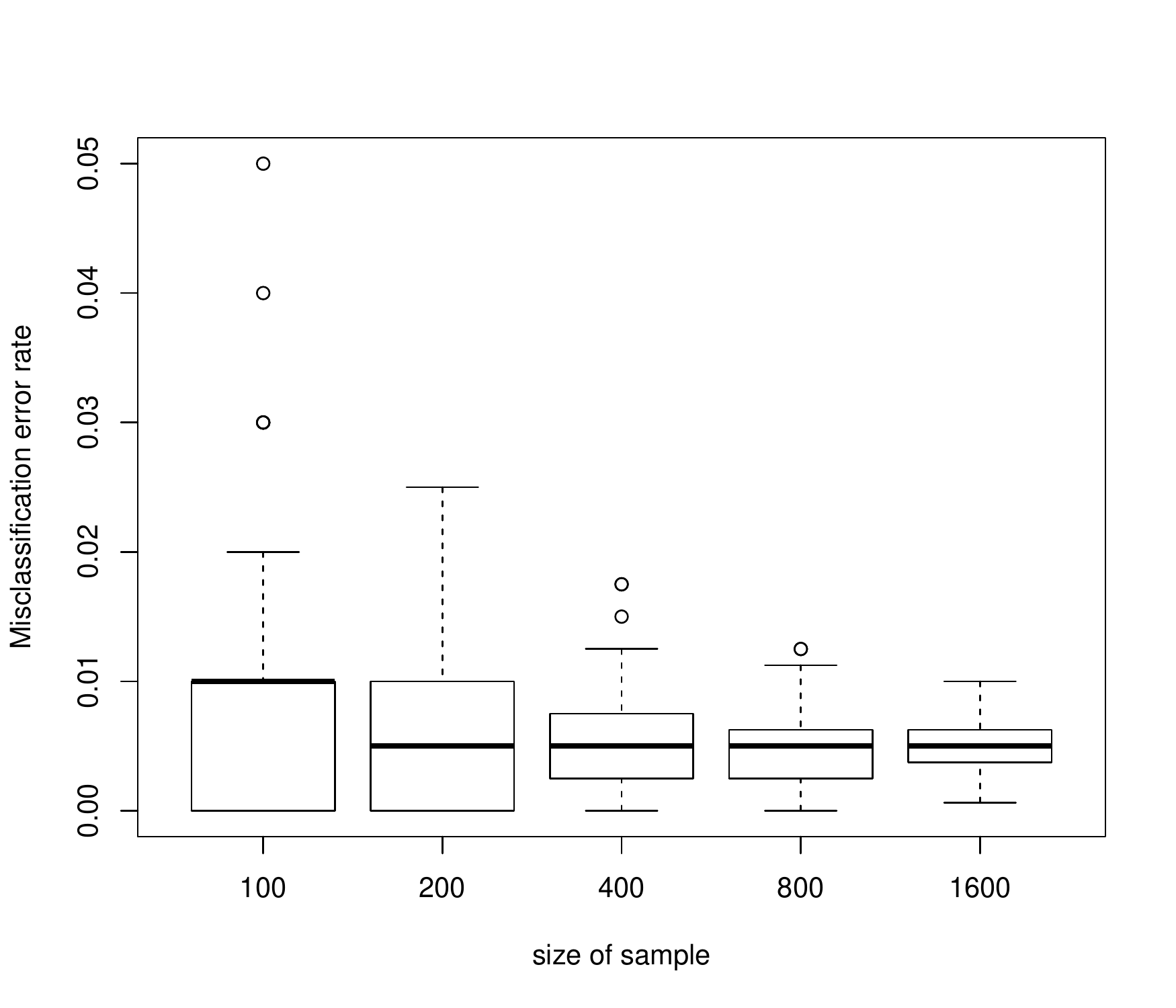} \label{simul3error}                }
  \caption{Boxplots of indicators of the good behavior of the estimation procedure for different sample sizes.\label{simul3}}
\end{figure}

\subsection{Robustness}
Samples are generated from the bivariate Poisson mixture model  \citep{Kar08} with $\bpi=(1/3,2/3)$, whose one-dimensional marginal parameters $\bal_k=(\lambda_{k1},\lambda_{k2},\lambda_{k3})$ take on the following values: $\lambda_{1h}=h$  and  $\lambda_{2h}=3+h$, for  $h=1,2,3$ (see notation detailed in \citet{Kar08}). Figure~\ref{karlis} presents the results and shows the robustness of the Gaussian copula mixture model since it efficiently manages such data sets, as detailed now.

As shown by Figure~\ref{karlis_err}, the resulting misclassification error rate converges to the theoretical misclassification error rate (equal to $0.0967$). Moreover, Figure~\ref{karlis_kl} shows that the Kullback-Leibler divergence almost vanishes when the sample size increases, thus demonstrating the flexibility of the Gaussian copula mixture model. Moreover, the resulting parameters reflect the main properties of the true distribution. Indeed, 
Figure~\ref{karliscorrel} shows that the correlation coefficient between both variables, for component 1, converges to its theoretical value (equal to  $\lambda_{11} + \lambda_{13}=4$). In the same way, Figure~\ref{karlis_alpha} shows that the one-dimensional margin parameter of variable 1, for component 1, converges to its theoretical value (equal to $\lambda_{13}/\sqrt{(\lambda_{11} + \lambda_{13})(\lambda_{12} + \lambda_{13})}=3/\sqrt{20}\simeq 0.67$).

Finally, the estimation procedure is not too much time consuming since samples of size $n=100$ and $n=1600$ require respectively $12$ and $54$ seconds, on an Intel Core i5-3320M processor.

\begin{figure}[!ht]
  \centering
  \subfigure[Misclassification error rate]{\includegraphics[scale=0.4]{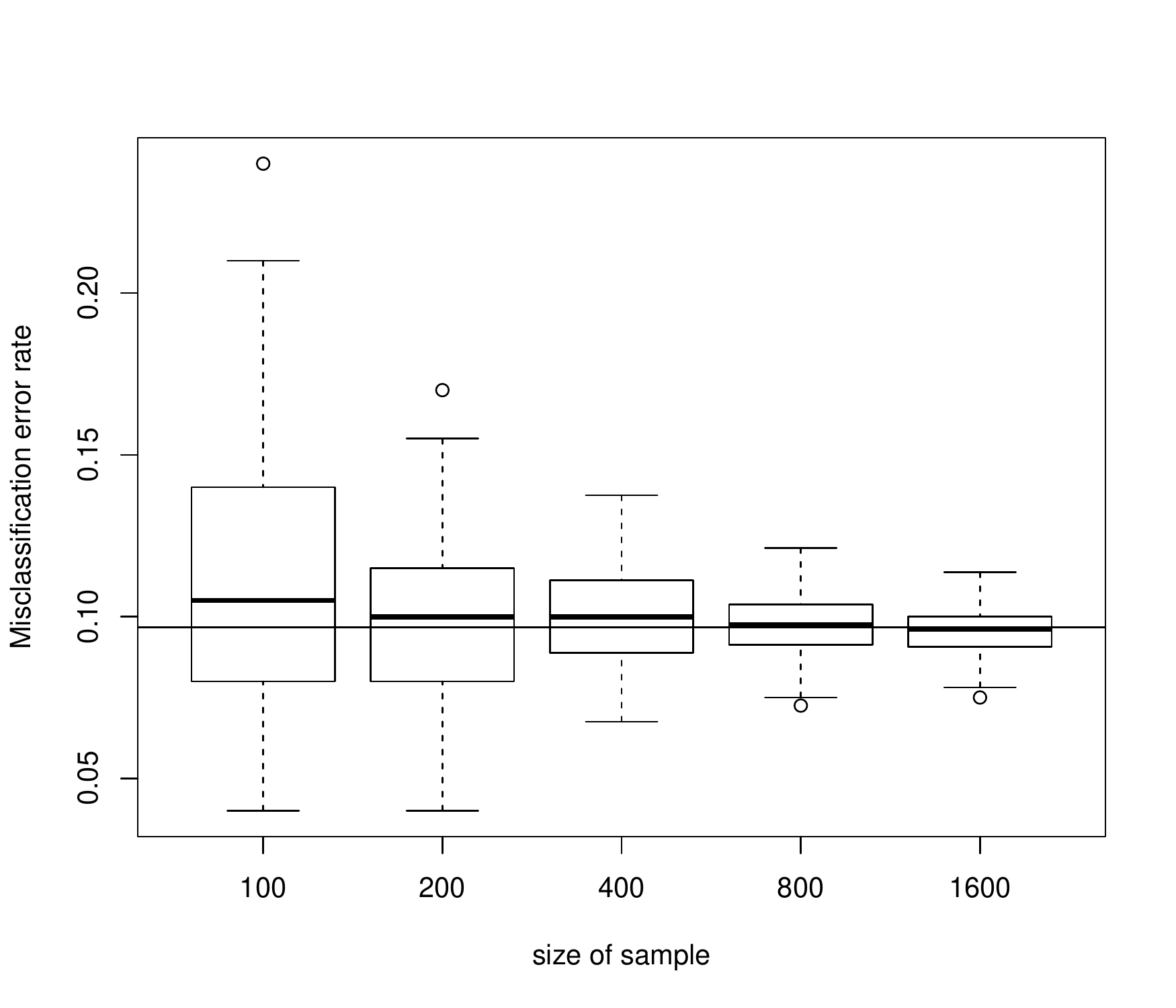}\label{karlis_err} }  
    \subfigure[Kullback-Leibler divergence from the true model]{\includegraphics[scale=0.4]{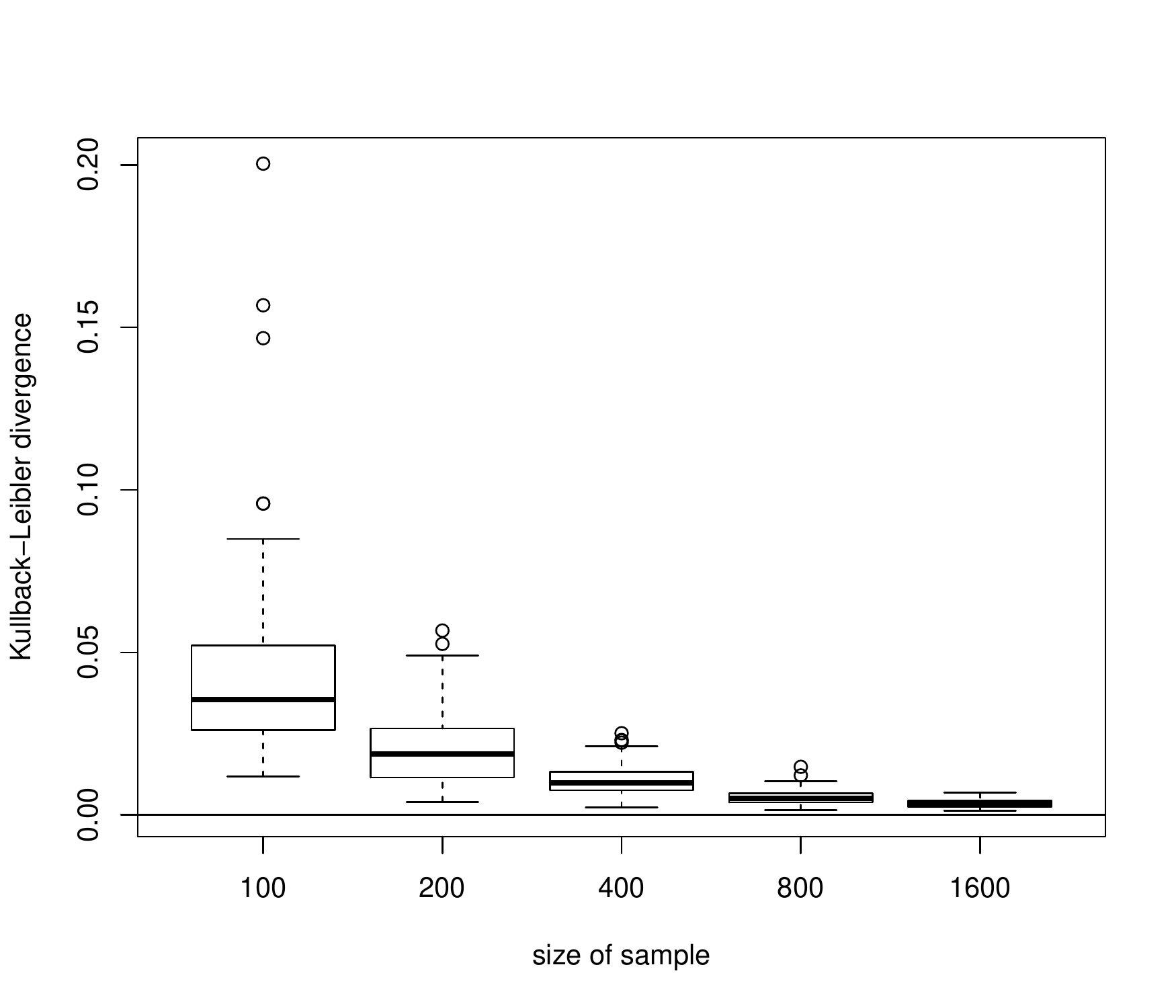} \label{karlis_kl}}              \\
    \subfigure[Correlation coefficient between both variables for component 1]{\includegraphics[scale=0.4]{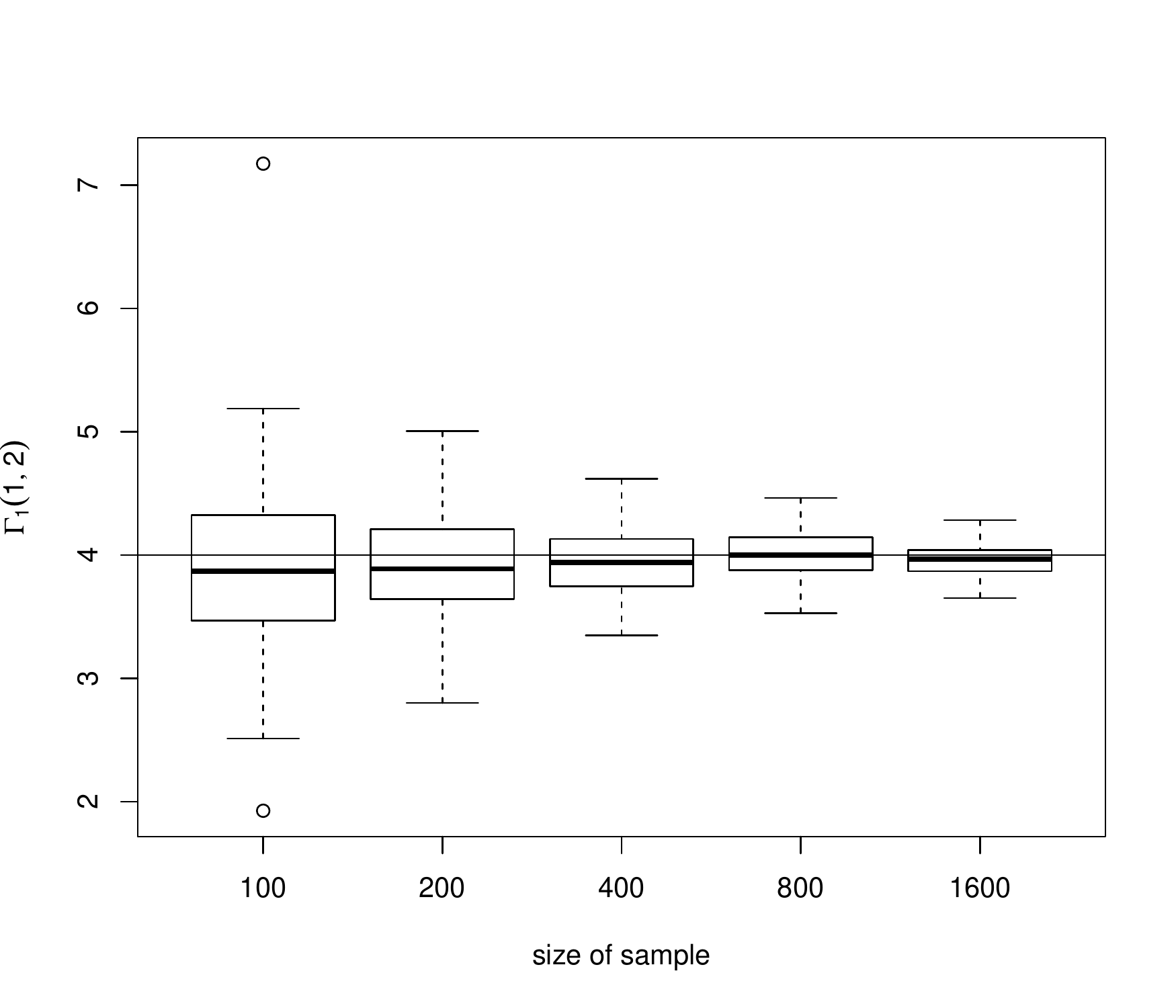} \label{karliscorrel}}
  \subfigure[One-dimensional margin parameter of variable 1 for component 1]{\includegraphics[scale=0.4]{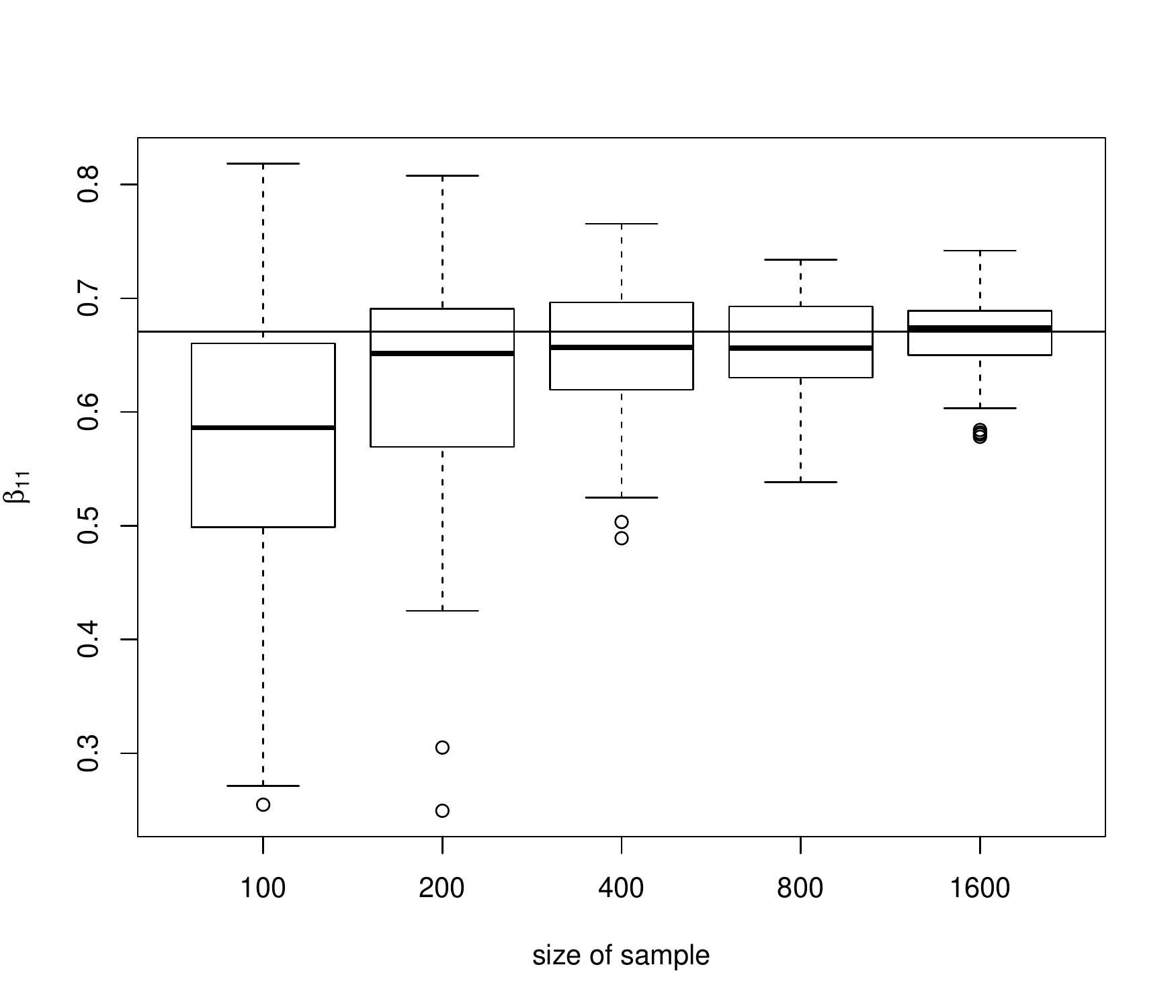} \label{karlis_alpha} }
  \caption{Boxplots of the  indicators related to the estimated model. Values obtained with the true Poisson mixture model are indicated by the horizontal black lines.}\label{karlis}
\end{figure}

\section{Applications} \label{app}
In this section, we analyze two real data sets with the proposed Gaussian copula mixture model. For each number of components, 10 runs of Gibbs sampler are performed under the conditions described in the previous section. 

Model selection is performed by using two information criteria (BIC criterion \citep{Sch78}, ICL criterion \citep{Bie00}).
These asymptotic criteria are computed with the estimate provided by the Gibbs sampler and not with the maximum likelihood estimate. However, this procedure remains valid since the BIC criterion can be computed with the estimate of the maximum \emph{a posteriori} \citep{Leb06}. Moreover, the ICL criterion can be computed by penalizing the BIC criterion with a term of entropy.

These criteria require the computation of the number of parameters. The locally independent model involves $\nu_{\text{Loc}}=(g-1)+g\sum_{j=1}^e\nu_j$ parameters where $\nu_j$ is the number of parameters involved by the one-dimensional marginal distribution of one component (\emph{i.e.} $\nu_j=2$ if $x^j$ is continuous, $\nu_j=1$ if $x^j$ is integer and $\nu_j=m_j-1$ if $x^j$ is ordinal with $m_j$ modalities). The heteroscedastic Gaussian copula mixture model involves $\nu_{\text{He}}=\nu_{\text{Loc}} + g\frac{e(e-1)}{2}$ parameters and the homoscedastic Gaussian copula mixture model requires $\nu_{\text{Ho}}=\nu_{\text{Loc}} + \frac{e(e-1)}{2}$ parameters.

\subsection{South African Hearth data set}
\subsubsection*{The data} 
Data are available at \url{http://sci2s.ugr.es/keel/dataset.php?cod=184}.
This data set is a retrospective sample of males in a heart-disease high-risk region of the Western Cape, South Africa.  Many of the coronary heart disease  (CHD) positive men have undergone blood pressure reduction treatment and other programs to reduce their risk factors after their CHD event. In some cases the measurements were made after these treatments. The class label indicates if the person has a coronary heart disease (negative or positive) and is hidden for our analysis. 
Individuals are described by the following nine variables. The continuous variables are systolic blood pressure (sbp), cumulative tobacco (tobacco), low density lipoprotein cholesterol (ldl), adiposity, obesity and current alcohol consumption (alcohol). The integer variables are type-A behavior (typea) and  age at onset (age). Finally, the binary variable indicates the presence or not of heart disease in the family history (famhist).

\subsubsection*{Model selection} 
Three mixture models (locally independent, heteroscedastic and homoscedastic mixture of Gaussian copulas) are fitted for various numbers of components. Table~\ref{wine_bic} presents the values of information criteria used to select the homoscedastic tri-component Gaussian copula mixture model.
Indeed, this model obtains the best results for fitting the data distribution (BIC) and for fitting the best partition (ICL). Note that the locally independent model with five components obtains a value of the BIC criterion close to the tri-component homoscedastic model, but the ICL criterion favors this latter. Moreover, this model detects less components than the locally independent model, thus its interpretation is easier.

\begin{table}[!ht]
\centering
\begin{tabular}{cccccccc}
 & g & 1 & 2 & 3 & 4 & 5  \\ 
\hline BIC & loc. indpt.  & -14127.26  & -13131.88  & -12813.92 & -12829.68  &\textbf{-12738.66}  \\
 & homo. & -14724.98 & -13016.09  & \textbf{-12739.94}  & -12774.15  & -12927.45  \\
 & hetero.  & -14724.98  & -13076.93  & \textbf{-12971.72}  & -13071.92 & -13253.06  \\
\hline ICL & loc. indpt.  & -14127.26  & -13144.21 & -12832.12  & -12887.19  & \textbf{-12805.68} \\
 & homo. &  -14724.98 & -13028.07 & \textbf{-12762.79}  & -12816.44  &  -12979.06  \\
 & hetero.  & -14724.98  & -13085.52  & \textbf{-12989.06}  & -13103.61  & -13299.16 \\
\hline 
\end{tabular} 
\caption{Values of the BIC and ICL criteria obtained on the South African Hearth data set (best values are in bold).\label{wine_bic}}
\end{table}

\subsubsection*{Partition study}
The three models overestimate the number of components (true number is two). However, the Gaussian copula mixture model finds number of components more relevant than the locally independent model. If we assume that the number of component is known, the misclassification error rates is slightly better for both copulas models (homo. and hetero.) in comparison to the locally independence model (see Table~\ref{tab:saconfu}). Moreover, the equality constraint of the covariance matrix increases the value of the BIC criterion obtained by the homoscedastic model, but it also impacts its resulting partition according to the partition resulting from the heteroscedastic model (see Table~\ref{tab:saconf2}).

\begin{table}[!ht]
\begin{center}
\begin{tabular}{ccccccc}
CHD & \multicolumn{2}{c}{loc. indpt.} & \multicolumn{2}{c}{homo.} & \multicolumn{2}{c}{hetero.} \\ 
 & Class 1 & Class 2 & Class 1 & Class 2 & Class 1 & Class 2 \\ 
 \hline negative & 187 & 149 & 191 & 150 & 195 & 148 \\ 
positive & 115 & 11 & 111 & 10 & 107 & 12 \\ 
\hline
\end{tabular} 
\end{center}
\caption{Confusion matrices between the CHD statu and the partitions provides by the three bi-component mixture models.}\label{tab:saconfu}
\end{table}

\begin{table}[!ht]
\begin{center}
\begin{tabular}{ccccc}
&& \multicolumn{3}{c}{hetero.} \\ 
& & Class 1 & Class 2 & Class 3  \\ 
\hline 
& Class 1 & 46 & 11 & 0 \\ 
homo.& Class 2 & 0 & 92 & 14 \\ 
 &Class 3 & 0 & 3 & 296 \\ 
\hline 
\end{tabular} 
\end{center}
\caption{Confusion matrices between the tri-components homoscedatic model (row) and the tri-component homoscedastic model (column).}\label{tab:saconf2}
\end{table}

Figure~\ref{saheartACP} shows the PCA visualization for the component 3.
The tri-component homoscedastic Gaussian copula mixture model provides three well-separated classes as shown by the factorial representation of Figure~\ref{saind}. We can see that the class 2 (red triangles) is an "intermediate" class, while class 1 (red dots) is strongly different to class 3 (black dots). We now detail the model interpretation.

\begin{figure}[!ht]
  \centering
  \subfigure[Scatterplot of the individuals]{\includegraphics[scale=0.4]{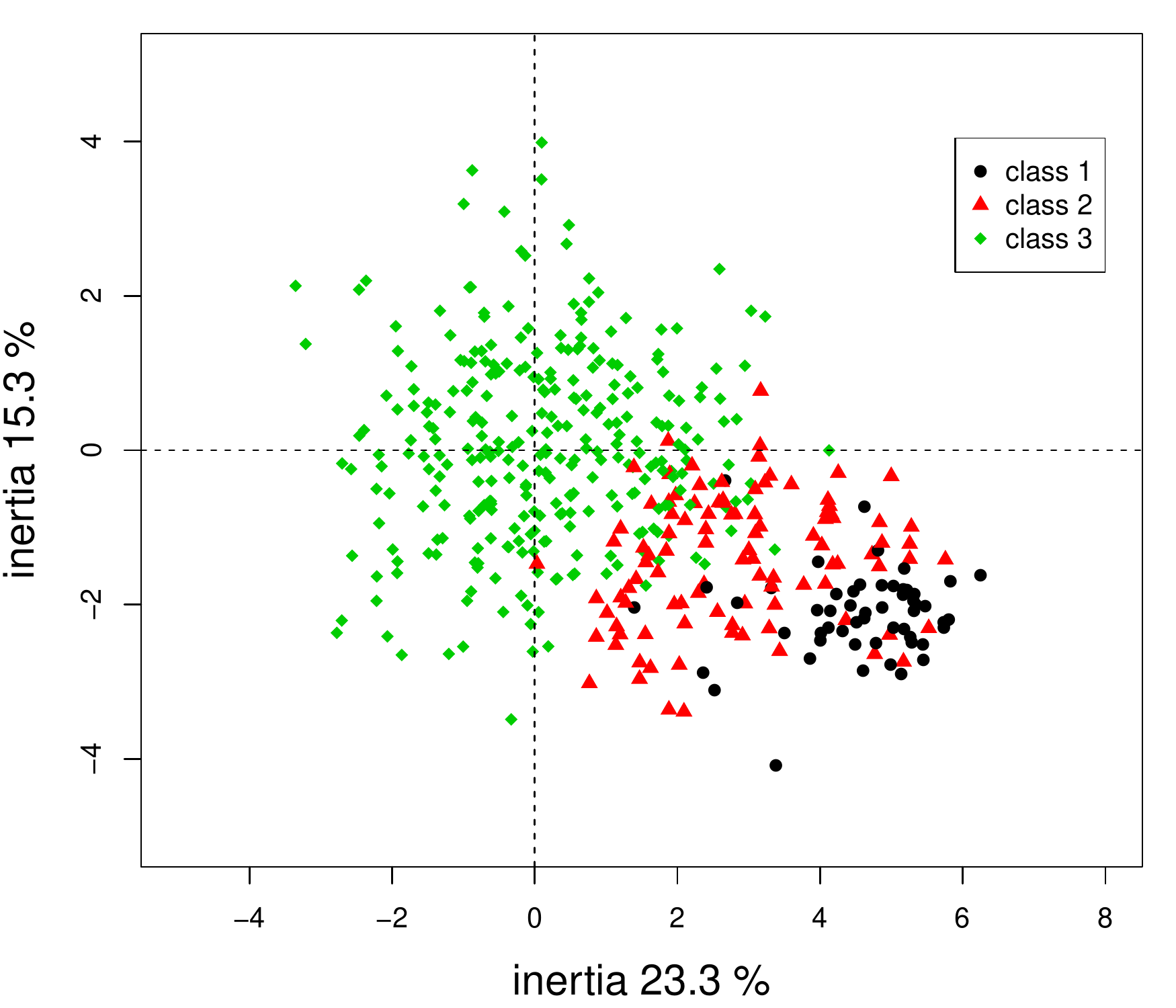} \label{saind}}
  \subfigure[Correlation circle]{\includegraphics[scale=0.4]{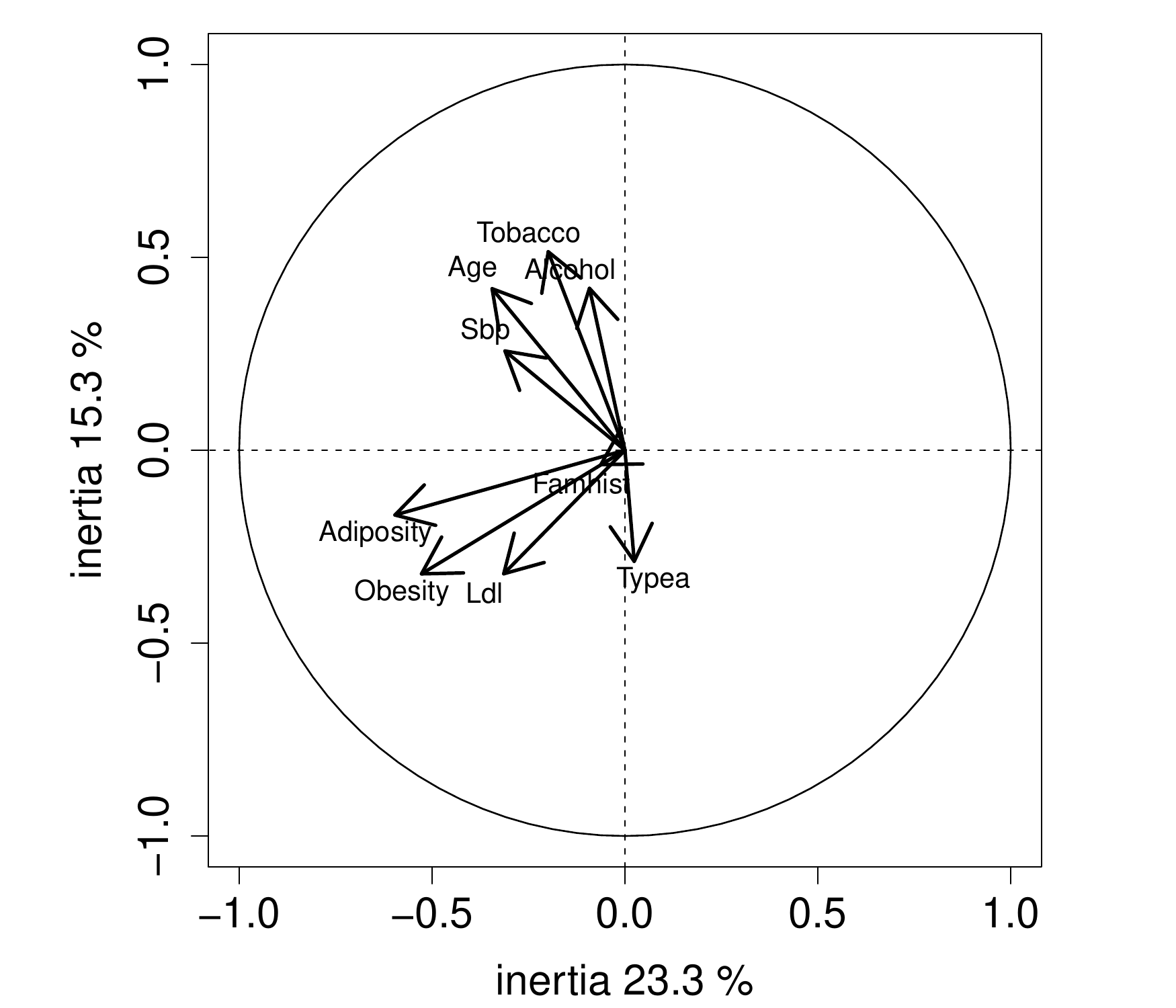} \label{sevar}}
  \caption{Visualization throughout parameters of component 3 for the tri-component homoscedastic Gaussian copula mixture model.}\label{saheartACP}
\end{figure}
 
\subsubsection*{Interpretation of best-fit model}
The PCA visualization (Figure~\ref{saheartACP}) shows that individuals of class 3 have a riskier behaviour than the others (large consumption of tobacco, alcohol, older population, large level of obesity).

 A three-level interpretation (proportions, one-dimensional marginal distributions and intra-class dependencies) is feasible using the parameters. The main characteristic variables are summarized in Figure~\ref{wine_corr}:
 \begin{itemize}
 \item Class~1 (\emph{weak-risk behaviours}): this is the minority class ($\pi_1=0.07$). It groups the young individuals having a short consumption alcohol and tobacco. This class groups 57 individuals where only one has a coronary heart disease.
 \item Class~2 (\emph{moderate-risk behaviours}): this class have a moderated proportion ($\pi_2=0.24$). This class is composed with individuals having a strong alcohol consumption. The other variables take intermediate values. Among the 106 individuals belonging to this class, 20 have a coronary heart disease.
 \item Class~3 (\emph{high-risk behaviours}): this is the majority class ($\pi_3=0.69$). It groups the individuals having the more dangerous behaviour. Among the 299 individuals belonging to this class, 139 have a coronary heart disease.
 \end{itemize}
 
Finally, for all the components, age and the consumptions of alcohol and tobacco are strongly linked (see Figure~\ref{sevar}. Moreover, adiposity and obesity are also strongly linked.

\begin{figure}[!ht]
  \centering \includegraphics[scale=0.6]{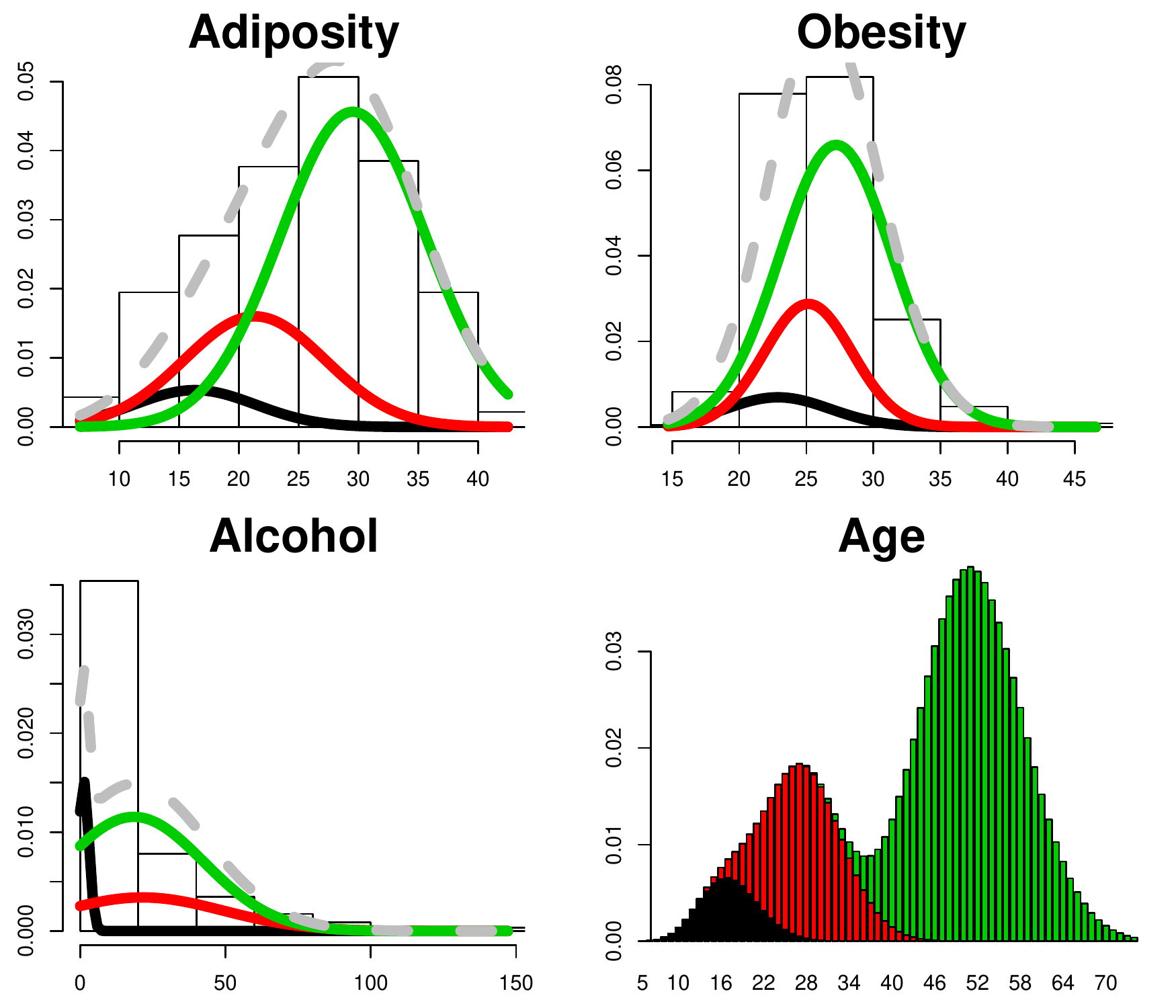}
  \caption{One dimensional marginal distributions for the South Africa Heart data: whole distribution (gray), component~1 (black), component~2 (red) and component~3 (green) of the tri-component homoscedastic Gaussian copula mixture model.}\label{wine_corr}
\end{figure}

\subsubsection*{Conclusion}
For such data, the Gaussian copula mixture model reduces the drawbacks of the locally independent model. By decreasing the number of components, it yields a more interpretable model that better fits the data (BIC criterion) and provides a pertinent partition (ICL criterion). 
Finally, the estimation of main intra-class dependencies, based on PCA outputs per component, is an efficient tool for refining the interpretation.

\subsection{Forest fire data set}
\subsubsection*{The data} 
Data are composed of 517 forest fires that have occurred in the northeast region of Portugal \citep{Cor07}. These forest fires are described by the following meteorological variables: seven continuous variables (four fire weather index (FWI) system variables, \emph{i.e.} fine fuel moisture code (FFMC), duff moisture code (DMC), drought code (DC), initial season index (ISI), and three meteorological variables, \emph{i.e.} temperature (Temp), relative humidity (RH) and wind) and three binary variables indicating the presence of rain, the season (summer or other) and the day of the week (weekend or other).

\subsubsection*{Model selection} 
Table~\ref{fire_bic} presents the values of information criteria used to distinctly select the heteroscedastic tri-component Gaussian copula mixture model.
Note that the locally independent model degenerates with five components.
\begin{table}[htp]
\centering
\begin{tabular}{ccccccccc}
 & g & 1 & 2 & 3 & 4 & 5    \\ 
\hline BIC & loc. indpt.  & -15152.95 & -14164.51 & \textbf{-13990.27 }& -14068.92 & NA\\
 & homo.  & -14401.80 &\textbf{ -13751.82} & -13927.05 & -13986.90 & -13755.69\\
 & hetero &  -14401.80 & -13781.86 & \textbf{-13680.67 }& -13846.63 & -13745.84\\
\hline ICL & loc. indpt.  & -15152.95 & -14170.97 & \textbf{-14022.49} & -14131.22 & NA\\
 & homo.  & -14401.80 & \textbf{-13756.76} & -13956.68 & -14070.49 & -13774.41\\
 & hetero &  -14401.80 & -13785.33 &\textbf{ -13682.68 }& -13885.11 & -13776.81\\
\hline 
\end{tabular} 
\caption{Values of the BIC and ICL criteria obtained on the forest fire data set (best values are in bold).\label{fire_bic}}
\end{table}

The heteroscedastic model with three components obtains better values of the information criteria than the locally independent model since it models the intra-component dependencies. Moreover, as shown by Table~\ref{fire_confu}, these intra-component dependencies influence the resulting partition since four individuals are affiliated in different classes by both models.

\begin{table}[ht!]
\begin{center}
\begin{tabular}{rccc}
& \multicolumn{3}{c}{hetero.} \\		
 &  Class 1 & Class 2 & Class 3 \\ 
\hline Class 1 & 33 & 1 & 0 \\ 
loc. indpt. Class 2 & 1 & 402 & 0 \\ 
 Class 3 & 2 & 0 & 78 \\ 
 \hline
\end{tabular} 
\end{center}
\caption{Confusion matrix between the partition provides by the locally independent model with three components (rows) and the heteroscedastic Gaussian copula mixture with three components (columns).}\label{fire_confu}
\end{table}

\subsubsection*{Interpretation of best-fit model}
The three-step interpretation of the  heteroscedastic tri-component Gaussian copula mixture model is presented by using the parameters summarized in Figure~\ref{fire_corr}:
\begin{itemize}
\item Class~1 (\emph{unpredictable fires}): this is the minority class ($\pi_1=0.09$). It groups fires difficulty predictable since they occur with small values of the four FWI system variables (especially ISI). These fires occur during all the year only with dry weather.
\item Class~2 (\emph{predictable fires of summer}): this is the majority class ($\pi_2=0.78$). This class is composed with fires occurring mainly in summer. They appear with high values of the four FWI system variables and a high temperature.
\item Class~3 (\emph{fires of winter}): this is a moderate class ($\pi_3=0.13$). It groups fires occurring in winter, so with small temperature. They occur with a dry weather and very small values of the four FWI systems (except ISI).
\end{itemize}

The correlation matrices highlight dependencies between the summer period and high temperatures, and between FFMC and DMC values (see Figure~\ref{firevar} for component 3). Finally, it should be noted that the variable indicating the day of the week roughly follows the same distribution for all three classes.

\begin{figure}[!ht]
  \centering \includegraphics[scale=0.6]{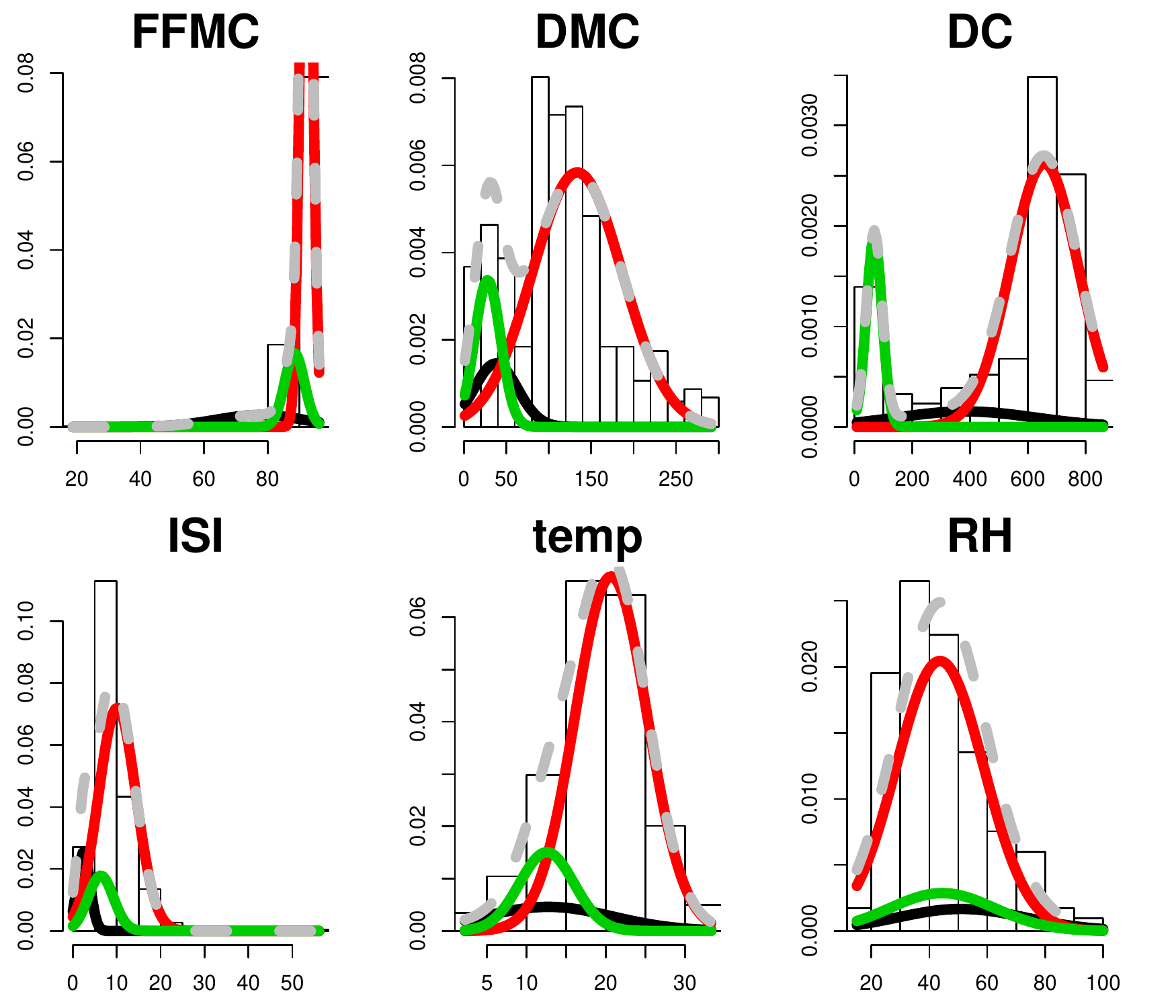}
  \caption{One dimensional marginal distributions for the forest fire data: whole distribution (gray), component~1 (black), component~2 (red) and component~3 (green) of the tri-component homoscedastic Gaussian copula mixture model.}\label{fire_corr}
\end{figure}

The results of the PCA done according to component 3 is shown in Figure~\ref{fire_pca}. Obviously, Figure~\ref{fire_ind} shows that the individuals belonging to component~3 are strongly different to the other ones. Indeed, few individuals belonging to component~2 are visible in this map. The other individuals are too far away from the origin. Thus, the distribution of component~3 is strongly different from the other distributions.
\begin{figure}[!ht]
  \centering
  \subfigure[Scatterplot of the individuals]{\includegraphics[scale=0.4]{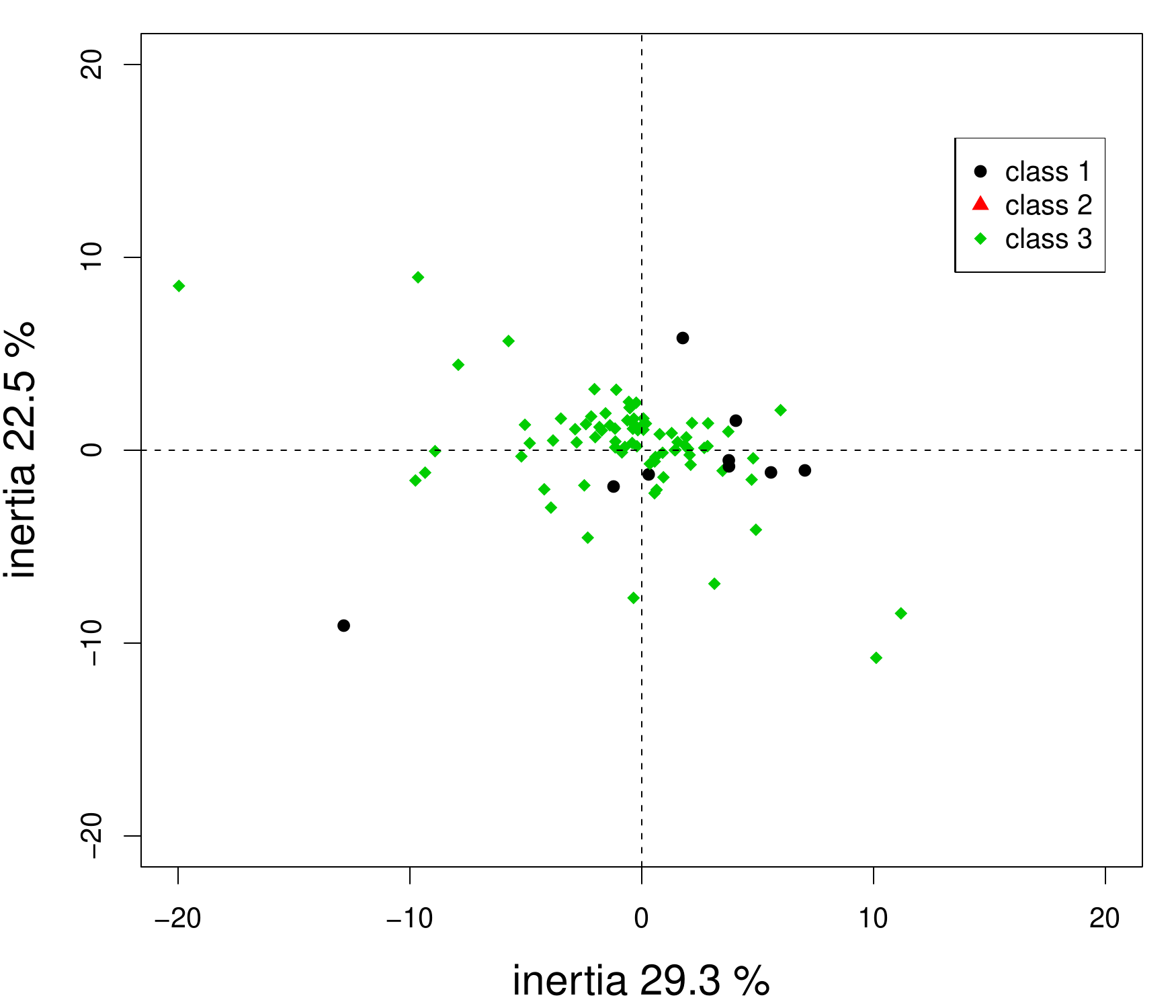} \label{fire_ind}}
  \subfigure[Correlation circle]{\includegraphics[scale=0.4]{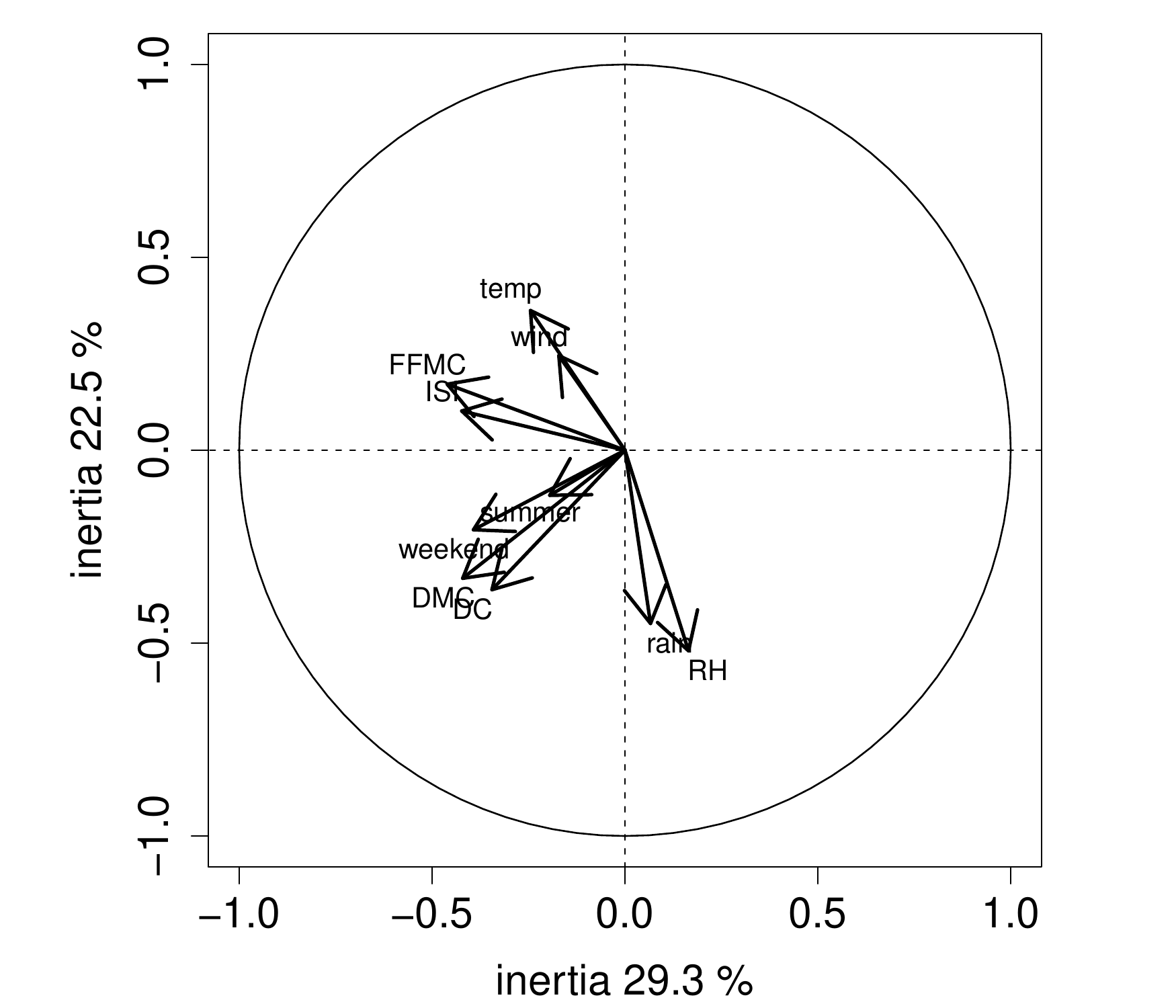} \label{firevar}}
  \caption{Visualization throughout parameters of component 3 for the tri-component heteroscedastic Gaussian copula mixture model.}\label{fire_pca}
\end{figure}

\subsubsection*{Conclusion}
The cluster analysis obtained with the Gaussian copula mixture model is more accurate than that obtained with the locally independent model. Indeed, it provides a meaningful model which needs less components. This model puts the light on three kinds of fires: the fires predictable with the  FWI system (class~2), the fires occurring in winter (class~3) and the unpredictable fires (class~1).

\section{Conclusion and future extensions} \label{conclusion}
A Gaussian copula mixture model has been introduced and used to cluster mixed data. Using Gaussian copulas, the one-dimensional marginal distributions of each component follow conventional distributions, and intra-class dependencies are effectively modelled. Thus the model results can be easily interpreted in three steps, as in the case of models developed for data sets composed of a single type of variable. Using the continuous latent variables of Gaussian copulas, a PCA-type method displays for component-based visualization of individuals. Moreover, this approach provides a summary of intra-component dependencies, which avoids fastidious interpretation of correlation matrices. 

In the description of numerical experiments and applications, we pointed out that this model is sufficiently robust to fit data obtained from another model. Furthermore, it can reduce the bias produced by the locally independent model (\emph{e.g.} reduction of the number of component).

The number of parameters increases with the number of components and number of variables, particularly due to the correlation matrices of the Gaussian copulas. In order to overcome this drawback, we have proposed an homoscedastic version of the model assuming equality between correlation matrices.  However, the number of parameters required by this model can stay large when the number of variables increases. Therefore, more parsimonious correlation matrices could be proposed to overcome this drawback in future studies.

Since the distribution of all the variables is modeled, this model could be used to manage data sets with missing values. By assuming that values are missing at random,  the Gibbs sampler could also be adapted, but the underlying principle would remain roughly the same.

Finally, the proposed model cannot cluster non-ordinal categorical variables having more than two modalities. In such cases, the cumulative distribution function is not defined. An artificial order between modalities could be added to define a cumulative distribution function, but this method presents three potential difficulties that require attention: it assumes regular dependencies between the modalities of two variables, its estimation would slow down the estimation algorithm, and its stability would have to be verified.

\textbf{MixCluster} (\url{https://r-forge.r-project.org/R/?group_id=1939}) is an R package which performs the cluster analysis method described in the article. It also contains the data sets used  in this paper.

\bibliography{biblio}
\bibliographystyle{elsarticle-harv}

\appendix
\section{Proof of the model identifiability}\label{identifiability}
The model identifiability is proved by two propositions. The first proposition proves the model identifiability when the variables are continuous and/or integer. This proposition presents the reasoning in a simple case since it does not consider the ordinal variables. The second proposition proves that the model requires at least one continuous or integer variable to be identifiable.

\begin{prop}[Identifiability with continuous and integer variables]
The  Gaussian copula mixture model is weakly identifiable \citep{Tei63} if the variables are continuous and integer ones (\emph{i.e.} the one-dimensional marginal distributions of the components are Gaussian or Poisson distributions). Thus, 
\begin{align}
\forall \bx \in \mathbb{R}^c \times \mathbb{N}^d, \quad & \sum_{k=1}^g \pi_k p(\bx|\bal_k) = \sum_{k=1}^{g'} \pi_k'p(\bx |\bal_k') \label{ident_assumption}\\
\Rightarrow  \quad& g=g',\; \bpi=\bpi',\; \bal=\bal'.
\end{align}
\end{prop}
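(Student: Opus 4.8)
The plan is to reduce the global statement \eqref{ident_assumption} to two sub-tasks: first recover the number of components $g$, the proportions $\bpi$, and a one-to-one matching between the two label sets; then, once components are matched, recover each component's margin parameters $\bbet_k$ and its correlation matrix $\bG_k$. The structural fact that drives everything is that a Gaussian copula with Gaussian margins is \emph{exactly} a multivariate Gaussian. Hence, within component $k$, the continuous subvector $\bxc$ follows $\mathcal{N}_c(\bsm_k^{\textsc{c}},\bss_k^{\textsc{c}})$ with $\bsm_k^{\textsc{c}}=(\mu_{k1},\ldots,\mu_{kc})$ and $\bss_k^{\textsc{c}}$ having entries $\sigma_{ki}\sigma_{kj}\Gamma_{k,ij}$, while each discrete margin is a genuine Poisson law. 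This is already visible in the continuous factor of \eqref{composante}.

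First I would marginalize \eqref{mixture} over the discrete variables. What remains is $\sum_k \pi_k\, p(\bxc|\bal_k)$, a finite mixture of $c$-variate Gaussians, so I can invoke the classical identifiability of Gaussian mixtures (Yakowitz--Spragins) to conclude $g=g'$ and, up to a relabelling of the components, $\bpi=\bpi'$ together with all continuous-block quantities: the means $\mu_{kj}$, the scales $\sigma_{kj}$, and the correlations $\Gamma_{k,ij}$ for continuous indices $i,j$. I would fix this relabelling from now on.

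Next, to recover the Poisson rates and the remaining blocks $\bG_{k\textsc{d}\textsc{c}}$ and $\bG_{k\textsc{d}\textsc{d}}$, I would exploit the conditional law of $\bxd$ given $\bxc$ appearing in the second factor of \eqref{composante}: conditioning on $\bxc$ shifts the latent discrete Gaussian by the component-specific mean $\bsm_k^{\textsc{d}}=\bG_{k\textsc{d}\textsc{c}}\bG_{k\textsc{c}\textsc{c}}^{-1}\Psi(\bxc;\bal_k)$, so letting $\bxc$ range over $\mathbb{R}^c$ separates components that happen to share the same continuous marginal and exposes $\bG_{k\textsc{d}\textsc{c}}$. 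The univariate Poisson marginals, identifiable as Poisson mixtures (Teicher), then fix the rates $\lambda_{kj}$, and the within-component Gaussian-copula structure (the standardized/polychoric correlation discussed in Section~\ref{strengths}) fixes $\bG_{k\textsc{d}\textsc{d}}$. The residual all-integer case $c=0$ is handled by the same scheme, with the probability generating function of $\bxd$ playing the role of the Gaussian marginal.

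I expect the main obstacle to be precisely the passage from \emph{marginal} to \emph{joint} separation: the Gaussian-mixture step only distinguishes components with distinct continuous marginals, so two components sharing identical $\bsm_k^{\textsc{c}}$, $\sigma_{kj}$ and continuous-block correlations, but differing in their discrete parameters or in their cross-correlations, are \emph{not} separated at that stage. Making the conditional argument above rigorous — in particular showing that, for each fixed value of $\bxc$, the finite family of conditional discrete laws stays linearly independent — is where the real work lies.
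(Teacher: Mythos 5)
There is a genuine gap: your proposal stops exactly where the paper's proof has to start. The recovery of the cross-block $\bG_{k\textsc{c}\textsc{d}}$ and of the discrete block $\bG_{k\textsc{d}\textsc{d}}$ is only announced, and you yourself flag that the key step --- showing that, for fixed $\bxc$, the conditional discrete laws of the components remain linearly independent --- is ``where the real work lies''. That work is the entire content of the paper's argument, and it is carried out by a different device than the one you sketch: a tail-domination argument rather than a linear-independence one. For a continuous index $j$ and a discrete index $h$, the paper orders the components so that component $1$ has the heaviest margin in variable $j$ (largest $\sigma_{1j}$, ties broken by the mean), divides both sides of \eqref{ident_assumption} by the dominant term $\varepsilon_1(x^j)\Phi(a_1)$, and lets $x^h\rightarrow\infty$ along the level set $\gamma_t$ on which $a_1$ is held equal to $t$; every subordinate component vanishes in the limit, leaving $\int_t^{a_1'}\phi(u)\,du\,/\,\Phi(t)=0$ for all $t$, hence $a_1'=a_1$ and $\rho_1=\rho_1'$. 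Peeling off components one at a time and ranging over all pairs $(j,h)$ gives $\bG_{k\textsc{c}\textsc{d}}=\bG_{k\textsc{c}\textsc{d}}'$, and a variant using a window $B(t,\xi)$ handles pairs of integer variables to get $\bG_{k\textsc{d}\textsc{d}}=\bG_{k\textsc{d}\textsc{d}}'$. Nothing in your sketch substitutes for this; in particular, invoking the polychoric correlation to ``fix'' $\bG_{k\textsc{d}\textsc{d}}$ is circular, since that coefficient is only defined once each component's joint discrete law has been isolated, which is precisely what must be proved.

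A secondary problem is your opening step. Marginalizing out the discrete variables and applying Yakowitz--Spragins to the resulting $c$-variate Gaussian mixture cannot by itself yield $g=g'$ and $\bpi=\bpi'$: two components sharing the same continuous-block parameters collapse into a single Gaussian atom in that marginal, and only the sum of their proportions is identified --- the very failure mode you describe at the end. The paper's first step invokes the identifiability of the multivariate Gaussian mixture \emph{and} of each univariate Poisson mixture together, so that components are separated as soon as they differ in at least one margin, which simultaneously delivers $g$, $\bpi$, all the $\bbet_{kj}$ and $\bG_{k\textsc{c}\textsc{c}}$ before the correlation argument begins. As written, your route would have to postpone the determination of $g$ and of the component matching until after the conditional analysis you have not carried out, so the reduction in your first paragraph does not go through.
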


\begin{proof}
The identifiability of the multivariate Gaussian mixture models and of the univariate Poisson mixture model \citep{Tei63,Yak68} involves that \eqref{ident_assumption} implies
\begin{equation}
g=g',\; \bpi=\bpi',\; \bbet_{kj}=\bbet_{kj}'\text{ and } \bG_{k\textsc{c}\textsc{c}}=\bG_{k\textsc{c}\textsc{c}}'.
\end{equation}
We now show that $\bG_{k\textsc{c}\textsc{d}}=\bG_{k\textsc{c}\textsc{d}}'$ and $\bG_{k\textsc{d}\textsc{d}}=\bG_{k\textsc{d}\textsc{d}}'$.

Let $j\in\{1,\ldots,c\}$ and $h\in\{c+1,\ldots,e\}$. We denote by $\rho_k=\bG_k(j,h)$, $\rho_k'=\bG_k'(j,h)$,  $v_k=\Phi_1^{-1}(P(x^j;\bbet_{kj}))$, $\varepsilon_k(x^j)=\pi_k\frac{\phi_1(v_k)}{\sigma_{kj}}$,  $a_k=\frac{b_k^{\oplus}(x^j)-\rho_k v_k}{\sqrt{1-\rho_k^2}}$ and $a_k'=\frac{b_k^{\oplus}(x^j)-\rho_k' v_k}{\sqrt{1-\rho_k'^2}}$. Without loss of generality, we order the components as such $\sigma_{kj}>\sigma_{k+1j}$ and if $\sigma_{kj}=\sigma_{k+1j}$ then $\mu_{kj}>\mu_{k+1j}$, then \eqref{ident_assumption} implies that
\begin{equation*}
1+\sum_{k=2}^g (\varepsilon_k (x^j)\Phi(a_k))/(\varepsilon_1 (x^j)\Phi(a_1))=\sum_{k=1}^g \varepsilon_k(x^j) \Phi(a_k')/(\varepsilon_1 (x^j)\Phi(a_1)).
\end{equation*}
Let $\gamma_t= \{ (x^j,x^h) \in \mathbb{R} \times \mathbb{N}: \; a_1=t\}$. Then, letting $x^h\rightarrow \infty$ as such $(x^j,x^h)\in \gamma_t$,
\begin{equation}
\forall t, \quad \frac{\int_{t}^{a_1'}\phi(u)du}{\Phi(t)}=0.
\end{equation}
Thus $a_1'=a_1$, so $\rho_{1}'=\rho_{1}$. Repeating this argument for $k=2,\ldots,g$ and for all the couples $(j,h)$, we conclude that $\bG_{k\textsc{c}\textsc{d}}=\bG_{k\textsc{c}\textsc{d}}'$.

When both variables are integer, we use the same argument with $\gamma_{(t,\xi)}= \{ (x^j,x^h) \in \mathbb{N} \times \mathbb{N}: \; a_1\in B(t,\xi) \}$. Note that if $\rho_{1}\neq \rho_1'$ then $\exists n_0$ as such $\forall x^j>n_0$ $a_1'>t+\xi$. Letting $x^h\rightarrow \infty$ as such $(x^j,x^h)\in \gamma_{(t,\xi)}$, we obtain the following contradiction
$
 \frac{\int_{t+\xi}^{a_1'}\phi(u)du}{\Phi(t-\xi)}=0 \text{ and } \frac{\int_{t+\xi}^{a_1'}\phi(u)du}{\Phi(t-\xi)}>0.
$
So, $a_1'=a_1$ then $\rho_1=\rho_1'$. Repeating this argument for $k=2,\ldots,g$ and for all the couples $(j,h)$, we conclude that $\bG_{k\textsc{d}\textsc{d}}=\bG_{k\textsc{d}\textsc{d}}'$. 
\end{proof}

\begin{prop}[Identifiability of the  Gaussian copula mixture model] \label{propindent}
The  Gaussian copula mixture model is weakly identifiable \citep{Tei63} if at least one variable is continuous or integer.
\end{prop}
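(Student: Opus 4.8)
The plan is to single out one continuous or integer variable as an \emph{anchor}, use it to pin down the global structure of the mixture, and then recover the ordinal parameters by driving the anchor to infinity so as to isolate the components one by one. Let $x^{j_0}$ be a continuous or integer variable. Its univariate marginal is a Gaussian (resp.\ Poisson) mixture, so the identifiability of univariate Gaussian and Poisson mixtures \citep{Tei63,Yak68} already forces $g=g'$, $\bpi=\bpi'$ and $\bbet_{kj_0}=\bbet_{kj_0}'$ once the components are ordered consistently. Moreover, the marginal law of the continuous and integer coordinates is itself a Gaussian copula mixture of the same type (a correlation sub-block of each $\bG_k$, with unchanged margins), so the previous proposition applies verbatim and yields all their margin parameters together with the whole correlation block indexed by these variables. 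What remains is, for each ordinal coordinate $h$, the margin $\bbet_{kh}$, the cross correlations $\bG_k(j,h)$ with the non-ordinal variables, and the purely ordinal entries $\bG_k(h,h')$.

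First I would pair the anchor with a single ordinal coordinate $h$ and study their joint law, a density in $x^{j_0}$ and a probability in $x^h$. Writing $v_k$ for the standardized anchor value, $\rho_k=\bG_k(j_0,h)$ and $a_k^{\oplus}=(b_k^{\oplus}(x^h)-\rho_k v_k)/\sqrt{1-\rho_k^2}$, this joint takes exactly the form $\sum_k \varepsilon_k(x^{j_0})\big(\Phi(a_k^{\oplus})-\Phi(a_k^{\ominus})\big)$ encountered in the previous proof. Ordering the components by the heaviness of the upper tail of the anchor margin (largest $\sigma_{kj_0}$ for a Gaussian anchor) and letting $x^{j_0}\to\infty$, the prefactor $\varepsilon_k$ makes the heaviest-tailed component dominate, leaving the conditional ordinal law of component $1$. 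The genuinely new difficulty is that, $x^h$ having bounded support, the thresholds $b_1^{\oplus}(x^h)$ stay fixed, so one cannot freeze $a_1$ at a constant as in the previous proposition; instead the increments $\Phi(a_1^{\oplus})-\Phi(a_1^{\ominus})$ tend to $0$ and the parameters must be read off from the \emph{rate} of this decay. Expanding $\log\Phi(a_1^{\oplus}(s))$ as $v_1\to\infty$, the coefficient of $v_1^2$ determines $|\rho_1|$ (its sign being given by the modality toward which the ordinal mass concentrates), after which the coefficient of $v_1$ determines $b_1^{\oplus}(s)$ for every modality $s$, hence the full margin $\bbet_{1h}$. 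Subtracting the now identified contribution of component $1$ and repeating for $k=2,\ldots,g$ recovers every $\bbet_{kh}$ and every $\rho_k=\bG_k(j_0,h)$; running the same argument with each continuous or integer coordinate in the role of $j_0$ then fixes all mixed entries $\bG_k(j,h)$.

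To close the argument I would recover the purely ordinal entries $\bG_k(h,h')$. For a pair of ordinal coordinates $(h,h')$ I would again send $x^{j_0}\to\infty$ in the trivariate law to isolate component $1$; conditionally on the anchor, $(y^h,y^{h'})$ is Gaussian with conditional covariance equal to the Schur complement $\bG_1(h,h')-\bG_1(j_0,h)\bG_1(j_0,h')$, and the joint decay rate of the cell probabilities $P(x^h=s,x^{h'}=s'\mid x^{j_0})$ identifies this conditional covariance. Since the cross terms were obtained in the previous step, this determines $\bG_1(h,h')$; peeling the components as before yields every $\bG_k(h,h')$. Equivalently, once the dominant component is isolated the problem reduces to a single Gaussian copula, whose ordinal--ordinal entries are the polychoric correlations and are identifiable \citep{Ols79}. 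Collecting the three steps gives $\bpi=\bpi'$, $\bbet=\bbet'$ and $\bG_k=\bG_k'$ for all $k$, which is the claim.

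The main obstacle is exactly the bounded support of the ordinal variables: since their thresholds cannot be driven to infinity, the clean ``frozen $a_1$'' limit of the previous proposition is unavailable, and the whole scheme rests on extracting the correlations and thresholds from the precise Gaussian tail rate of the dominant component. Making these asymptotics rigorous and justifying the order in which the components are peeled is the technical heart of the proof. This also explains why an anchor is indispensable: absent any continuous or integer coordinate there is no variable whose support allows a component to be isolated, and the finitely many cells produced by purely ordinal data do not separate the mixture, so identifiability fails.
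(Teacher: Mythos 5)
Your proposal is correct and follows essentially the same route as the paper's proof: anchor a continuous or integer variable, let it tend to $\pm\infty$ so that the component with the heaviest tail dominates, and recover each cross-correlation $\rho_k$ and each threshold $b_k^{\oplus}$ from the Gaussian tail rate of $\Phi(a_k)$ (your expansion of $\log\Phi(a_1^{\oplus})$ in powers of $v_1$ is just a more explicit version of the paper's Mills-ratio argument forcing $a_1'=a_1$ as affine functions of $v_1$), before peeling off components one by one. The only difference is presentational: you spell out the ordinal--ordinal block via the Schur complement, which the paper merely asserts.
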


\begin{proof}
In this proof, we consider only one continuous variable and two binary variables. Obviously, the same reasoning can be extend to the other cases.
We now show that $\bG_{k\textsc{c}\textsc{d}}=\bG_{k\textsc{c}\textsc{d}}'$ and $\bG_{k\textsc{d}\textsc{d}}=\bG_{k\textsc{d}\textsc{d}}'$.

Let $j=1$ and let $h\in\{2,3\}$. We note $\rho_k=\bG_k(j,h)$, $\rho_k'=\bG_k'(j,h)$, $v_k=\Phi_1^{-1}(P(x^j;\bbet_{kj}))$, $\varepsilon_k(x^j)=\pi_k\frac{\phi(v_k;0,1)}{\sigma_{kj}}$,  $a_k=\frac{b_k^{\oplus}(x^j)-\rho_k v_k}{\sqrt{1-\rho_k^2}}$ and 
$a_k'=\frac{b_k'^{\oplus}(x^j)-\rho_k' v_k}{\sqrt{1-\rho_k'^2}}$.
 Without loss of generality, we order the components as such $\sigma_{kj}>\sigma_{[k+1]j}$ and if $\sigma_{kj}=\sigma_{[k+1]j}$ then $\mu_{kj}>\mu_{[k+1]j}$. 
 Note that \eqref{ident_assumption} implies that
$$1+\sum_{k=2}^g (\varepsilon_k (x^j)\Phi(a_k))/(\varepsilon_1 (x^j)\Phi(a_1))=\sum_{k=1}^g \varepsilon_k(x^j) \Phi(a_k')/(\varepsilon_1 (x^j)\Phi(a_1)).$$ Letting $x^1\rightarrow \infty$  and assuming that $\rho_k>0$
then $\frac{\Phi(a_k')}{\Phi(a_k)}=1$. So, $\text{sign}(\rho_k)=\text{sign}(\rho_k').$
 By denoting $\kappa=\lim\limits_{a \to \infty} \frac{\phi(a)}{\Phi(a)}$ and letting $x^1\rightarrow \infty$
$\kappa \frac{1}{\kappa} \frac{\phi(a_k')}{\phi(a_k)}= 1.
$ Thus $a_1'=a_1$, so $\rho_{1}'=\rho_{1}$ and $b_k^{\oplus}(x^j)=b_k'^{\oplus}(x^j)$ so $\bbet_{kh}=\bbet_{kh}'$.

 Note that the same result can be obtain by tending $x^1$ to $- \infty$ is $\rho_k<0$. Repeating this argument for $k=2,\ldots,g$ and for all the couples $(j,h)$, we conclude that $\bG_{k\textsc{c}\textsc{d}}=\bG_{k\textsc{c}\textsc{d}}'$ then $\bG_{k\textsc{d}\textsc{d}}=\bG_{k\textsc{d}\textsc{d}}'$.
\end{proof}

\section{Prior distributions}\label{prior_margin}
We assume independence between the parameters as follows
\begin{equation}
p(\bt)=p(\bpi)\prod_{k=1}^g \left(p(\bG_k) \prod_{j=1}^e p(\bbet_{kj})\right).
\end{equation}

The classical conjugate prior distribution of the proportion vector is the Jeffreys non informative one which is the following Dirichlet distribution 
\begin{equation}
\bpi \sim \mathcal{D}_g\left(\frac{1}{2},\ldots,\frac{1}{2}\right).
\end{equation}

If $x^j$ is \emph{continuous}, then $\bbet_{kj}$ denotes the parameters of a univariate Gaussian distribution so  $p(\bbet_{kj})=p(\mu_{kj}|\sigma_{kj}^2)p(\sigma_{kj}^2)$ with
\begin{equation}
\sigma_{kj}^2\sim \mathcal{G}^{-1}(c_0,C_0)\text{ and }\mu_{kj}|\sigma_{kj}^2\sim \mathcal{N}_1(b_0,\sigma_{kj}^2 /N_0),
\end{equation} 
where $\mathcal{G}^{-1}(.,.)$ denotes the inverse gamma distribution. With an empirical Bayesian approach, the hyper-parameters $(c_0,C_0,b_0,N_0)$ are fixed as proposed by \citet{Raf96}, so $c_0=1.28$, $C_0=0.36\text{Var}(\textbf{x}^j)$, $b_0=\frac{1}{n}\sum_{i=1}^nx_i^j$ and $N_0=\frac{2.6}{\text{argmax } \textbf{x}^j - \text{argmin } \textbf{x}^j}$.

If $x^j$ is \emph{integer}, $\bbet_{kj}$ denotes the parameter of a Poisson distribution and
\begin{equation}
\bbet_{kj}\sim \mathcal{G}(a_0,A_0).
\end{equation}
According to \citet{Fru06}, the values of hyper-parameters $a_0$ and $A_0$ are empirically fixed to $a_0=1$ and $A_0=a_0n/\sum_{i=1}^nx_i^j$.

If $x^j$ is \emph{ordinal}, $\bbet_{kj}$ denotes the parameter of a multinomial distribution  and its Jeffreys non informative conjugate prior involves that \begin{equation}
\bbet_{kj} \sim \mathcal{D}_{m_j}\left(\frac{1}{2},\ldots,\frac{1}{2}\right).
\end{equation}

The conjugate prior of a covariance matrix is the Inverse Wishart distribution denoted by $\mathcal{W}^{-1}(.,.)$. Therefore, it is natural to define the prior of the correlation matrix $\bG_k$ from the prior of the correlation matrix $\boldsymbol{\Lambda}_k$. Indeed, $\bG_k|\boldsymbol{\Lambda}_k$ is deterministic \citep{Hof07}. So,
\begin{equation}
\boldsymbol{\Lambda}_k\sim \mathcal{W}^{-1}(s_0,S_0) \text{ and } \forall 1\leq h,\ell\leq e,\; \bG_k[h,\ell]=\frac{\boldsymbol{\Lambda}_k[h,\ell]}{\sqrt{\boldsymbol{\Lambda}_k[h,h]\boldsymbol{\Lambda}_k[\ell,\ell]}}, \label{prior2}
\end{equation}
where $(s_0,S_0)$ are two hyper-parameters. However, these parameters can not be fitted by an empirical Bayesian approach since $\boldsymbol{y}$ is not observed. Uniform distribution on $]-1,1[$ is also obtained for the margin distributions of each correlation coefficient by setting $s_0=e+1$ and $S_0$ equal to the identity matrix \citep{Bar00}.

\section{Metropolis-within-Gibbs sampler} \label{Metropolis-within-Gibbs sampler}
This section details the four samplings used in Algorithm~\ref{Gibbs}. The first two samplings are difficult to perform directly, so they are done by one iteration of two Metropolis-Hastings algorithms. For both Metropolis-Hastings algorithms, the instrumental distributions assume conditional independence between parameters. So, the smaller are the intra-class dependencies, the closer of the stationary distributions are the instrumental distributions of both  algorithms. Finally, the last two samplings used in Algorithm~\ref{Gibbs} are classical.

\subsection{Class membership and Gaussian vector sampling}
The aim is to sample from \eqref{latent_variables} but the sampling from $\tz,\ty|\tx,\bt^{(r-1)}$ cannot be performed directly. So, it is achieved by one iteration of following Metropolis-Hastings algorithms. The sampling from \eqref{latent_variables} is performed in two steps by using independence between the individuals which involves that
\begin{equation}
p(\tz,\ty|\tx,\bt^{(r-1)})=\prod_{i=1}^n p(z_i|\bx_i,\bt^{(r-1)}) p(\by_i|\bx_i,z_i,\bt^{(r-1)}). \label{ideal1}
\end{equation}
Firstly, each $z_i^{(r)}$ is independently sampled from the multinomial distribution
\begin{equation}
z_i|\bx_i,\bt^{(r-1)} \sim \mathcal{M}_g(t_{i1}(\bt^{(r-1)}),\ldots,t_{ig}(\bt^{(r-1)})),
\end{equation}
where $t_{ik}(\bt^{(r-1)})=\frac{\pi_k^{(r-1)} p(\bx_i|\bal_k^{(r-1)})}{p(\bx_i|\bt^{(r-1)})}$. Note that $t_{ik}(\bt^{(r-1)})$ is the posterior probability that $\bx_i$ arises from component $k$ with the parameters $\bt^{(r-1)}$.

Secondly, each $\by_i^{(r-1/2)}$ is independently sampled given $(\bx_i,\bz_i^{(r)},\bt^{(r-1)})$. Its first $c$ elements, denoted by $\by_i^{\textsc{c}(r-1/2)}$, are deterministically defined by $\by_i^{\textsc{c}(r-1/2)}=\Psi(\bxc_i;\bal_{z_i^{(r)}}^{(r-1)})$. Its  last $d$ elements, denoted by $\by^{\textsc{d}(r-1/2)}_i$, are sampled from the $d$-variate Gaussian distribution $\mathcal{N}_d(\bo,\bG_{z_i^{(r)}}^{(r-1)})$ truncated on the space $\mathcal{S}_{z_i^{(r)}}(\bxd_i)$
\begin{equation}
p(\by^{\textsc{d}}_i|\bx_i,\by_i^{\textsc{c}(r-1/2)},z_i^{(r)},\bt^{(r-1)}) \propto \phi_d(\by^{\textsc{d}}_i;\boldsymbol{\mu}_{z_i^{(r)}}^{\textsc{d}(r-1)},\boldsymbol{\Sigma}_{z_i^{(r)}}^{\textsc{d}(r-1)})\mathds{1}_{\{\by_i^{\textsc{d}} \in \mathcal{S}_{z_i^{(r)}}(\bxd_i)\}},
\end{equation}
where  $\boldsymbol{\mu}_{z_i^{(r)}}^{\textsc{d}(r-1)}=\boldsymbol{\Gamma}_{z_i^{(r)}\textsc{d}\textsc{c}}^{(r-1)}\boldsymbol{\Gamma}_{z_i^{(r)}\textsc{c}\textsc{c}}^{-1(r-1)}\by_i^{\textsc{c}(r-1/2)}$.

The computation of $t_{ik}(\bt^{(r-1)})$ involves the calculation of the integral defined in \eqref{composante} which can be time consuming if $d$ is large ($d>6$). In such a case, the sampling from \eqref{latent_variables} is replaced by one iteration of the Metropolis-Hastings algorithm  which independently samples each couple $(z_i,\by_i)$. Its stationary distribution is
\begin{equation}
p(z_i,\by_i|\bx_i,\bt^{(r-1)})\propto \pi_{z_i}p(\bx_i,\by_i|z_i,\bt^{(r-1)}).
\end{equation}
Note that $p(\bx_i,\by_i|z_i,\bt^{(r-1)})= \phi_e(\by_i;\bo,\bG_{z_i}^{(r-1)})\mathds{1}_{\{\by^{\textsc{c}}_i=\Psi(\bxc_i;\bal_{z_i}^{(r-1)})\}}\mathds{1}_{\{\by_i^{\textsc{d}} \in \mathcal{S}_{z_i}(\bxd_i)\}}$.

The Metropolis-Hastings algorithm samples a candidate $(z^{\star}_i,\by^{\star}_i)$ by the instrumental distribution $q_1(.|\bx_i,\bt^{(r-1)})$ which uniformly samples $z^{\star}_i$ then which samples $\by_i^{\star}|z^{\star}_i$ as follows. Its  first $c$ elements, denoted by $\by_i^{\star \textsc{c}}$, are equal to $\by_i^{\star \textsc{c}}=\Psi(\bxc_i;\bal_{z_i^{\star}}^{(r-1)})$. Its last $d$ elements, denoted by $\by_i^{\star \textsc{d}}$, follow a \emph{multivariate independent Gaussian} distribution truncated on $\mathcal{S}_{z_i^{\star}}(\bxd_i)$. Thus,
\begin{equation}
q_1(z_i,\by_i|\bx_i,\bt^{(r-1)})=\frac{1}{g}\frac{\phi_d(\by_i^{\textsc{d}}|\bo,\boldsymbol{I}) \mathds{1}_{\{\by_i^{\textsc{d}} \in \mathcal{S}_{z_i}(\bxd_i)\}}  }{\prod_{j=c+1}^e p(x_i^j;\bbet_{z_ij}^{(r-1)}) } \mathds{1}_{\{\by_i^{\textsc{c}}=\Psi(\bxc_i|\bal_{z_i}^{(r-1)})\}}.
\end{equation}
The candidate is accepted with the probability
\begin{equation}
\rho_{1i}^{(r)}=\min\left\{ \frac{q_1(z_i^{(r-1)},\by_i^{(r-1)}|\bx_i)}{q_1(z_i^{\star},\by_i^{\star}|\bx_i)} \frac{\pi_{z^{\star}_i} \phi_e(\by_i^{\star};\bo,\bG_{z^{\star}_i}^{(r-1)})}{\pi_{z_i^{(r-1)}} \phi_e(\by_i^{(r-1)};\bo,\bG_{z^{(r-1)}_i}^{(r-1)})} ;1\right\}.
\end{equation}
Thus, at  iteration $(r)$ of Algorithm~\ref{Gibbs}, the sampling according to \eqref{latent_variables} is performed via one iteration of the following Metropolis-Hastings algorithm having $p(z_i,\by_i|\bx_i,\bt^{(r-1)})$ as stationary distribution.
\begin{algo}
\begin{align}
(z^{\star}_i,\by^{\star}_i)& \sim q_1(z,\by|\bx_i)\\
(z^{(r)}_i,\by_i^{(r-1/2)})&=\left\{ \begin{array}{rl}
(z^{\star}_i,\by^{\star}_i) & \text{ with probability } \rho_{1i}^{(r)} \\
(z^{(r-1)}_i,\by_i^{(r-1)})& \text{ with probability } 1-\rho_{1i}^{(r)}.
\end{array}\right.
\end{align}
\end{algo}

\subsection{Margin parameter and Gaussian vector sampling}
The aim is to sample from \eqref{margins} but the sampling from $\bbet_{kj}|\tx,\by^{\bar{\jmath}(r)}_{[rk]},\tz^{(r)},\bbet_{k\bar{\jmath}}^{(r)},\bG_k$ cannot be performed directly. So, it is achieved by one iteration of the following Metropolis-Hastings algorithms. The sampling from \eqref{margins} is performed by using the following decomposition
\begin{multline}
p(\bbet_{kj},\by_{[rk]}^j|\tx,\by^{\bar{\jmath}(r)}_{[rk]},\tz^{(r)},\bbet_{k\bar{\jmath}}^{(r)},\bG_k^{(r-1)})=
p(\bbet_{kj}|\tx,\by^{\bar{\jmath}(r)}_{[rk]},\tz^{(r)},\bbet_{k\bar{\jmath}}^{(r)},\bG_k^{(r-1)})\\\times
p(\by_{[rk]}^j|\tx,\by^{\bar{\jmath}(r)}_{[rk]},\tz^{(r)},\bbet_{k\bar{\jmath}}^{(r)},\bbet_{kj},\bG_k^{(r-1)}).
\end{multline}

Parameter $\bbet_{kj}^{(r)}$ is firstly sampled. The full conditional distribution of $\bbet_{kj}$ is defined up to a normalizing constant such as
\begin{equation}
p(\bbet_{kj}|\tx,\by^{\bar{\jmath}(r)}_{[rk]},\tz^{(r)},\bbet_{k\bar{\jmath}}^{(r)},\bG_k^{(r-1)}) \propto
p(\bbet	_{kj})
\prod_{\{i: z_i^{(r)}=k\}} p(x_i^j|\by_i^{\bar{\jmath}(r)},z_i^{(r)},\bG_k^{(r-1)},\bbet_{kj}). \label{beta_sampling}
\end{equation}
The  distribution of $x_i^j|\by_i^{\bar{\jmath}(r)},z_i^{(r)},\bG_k^{(r-1)}$ with $z_i^{(r)}=k$ is defined by
\begin{equation}
p(x_i^j|\by_i^{\bar{\jmath}(r)},z_i^{(r)},\bG_k^{(r-1)},\bbet_{kj})=\left\{ \begin{array}{rl}
\phi_1(\frac{x_i^j-\mu_{kj}}{\sigma_{kj}};\tilde{\mu}_i,\tilde{\sigma}^2_i)/\sigma_{kj} & \text{if } 1\leq j\leq c \\
\Phi_1(\frac{b^{\oplus}(x_i^j) - \tilde{\mu}_i}{\tilde{\sigma}_i}) - \Phi_1(\frac{b^{\ominus}(x_i^j) - \tilde{\mu}_i}{\tilde{\sigma}_i}) & \text{otherwise},
\end{array}\right.
\end{equation}
where the real $\tilde{\mu}_{i}=\boldsymbol{\Gamma}_{k}^{(r-1)}[j,\bar{\jmath}]\boldsymbol{\Gamma}_{k}^{(r-1)}[\bar{\jmath}, \bar{\jmath}]^{-1}\by_i^{\bar{\jmath}(r)}$ is the full conditional mean of $y_i^j$, $\boldsymbol{\Gamma}_{k}[j,\bar{\jmath}]$ being the row $j$ of $\boldsymbol{\Gamma}_{k}$ deprived of  element $j$ and $\boldsymbol{\Gamma}_{k}[\bar{\jmath}, \bar{\jmath}]$ being the matrix $\boldsymbol{\Gamma}_{k}$ deprived of the row and the column $j$,  and where $\tilde{\sigma}^2_i$ is the full conditional variance of $y_i^j$ defined by $\tilde{\sigma}^2_i=1 -$ $\boldsymbol{\Gamma}_{k}^{(r-1)}[j,\bar{\jmath}]\boldsymbol{\Gamma}_{k}^{(r-1)}[\bar{\jmath}, \bar{\jmath}]^{-1}\boldsymbol{\Gamma}_{k}^{(r-1)}[\bar{\jmath}, j]$.
As the normalizing constant of \eqref{beta_sampling} is unknown, $\bbet_{kj}^{(r)}$  cannot be directly sampled. This problem is avoided by one iteration of the Metropolis-Hastings algorithm. The instrumental distribution of this Metropolis-Hastings algorithm $q_2(.|\tx,\tz)$ samples a candidate $\bbet_{kj}^{\star}$ according to the posterior distribution of $\bbet_{kj}$ under  conditional independence assumption (this distribution is explicit since the conjugate prior distributions are used). So,
$q_2(.|\tx,\tz)=p(\bbet_{kj}|\tx,\tz,\bG_k=\boldsymbol{I})$. Thus, according to \eqref{beta_sampling}, the candidate $\bbet_{kj}^{\star} $ is accepted with the probability
\begin{equation*}
\rho_2^{(r)}=\min\left\{
\frac{q_2(\bbet_{kj}^{(r-1)}|\tx,\tz)p(\bbet_{kj}^{\star})}{q_2(\bbet_{kj}^{\star}|\tx,\tz)p(\bbet_{kj}^{(r-1)})}
\prod_{\{i: z_i^{(r)}=k\}}
\frac{p(x_i^j|\by_{i}^{\bar{\jmath}(r)},z_i,\bG_k^{(r-1)},\bbet_{kj}^{\star})}
{p(x_i^j| \by_{i}^{\bar{\jmath}(r)},z_i,\bG_k^{(r-1)},\bbet_{kj}^{(r-1)})}
;1 \right\}.
\end{equation*}
Thus, at iteration $(r)$ of Algorithm~\ref{Gibbs}, step \eqref{margins} is performed via one iteration of the following Metropolis-Hastings algorithm whose the stationary distribution is $p(\bbet_{kj}|\bx_{[rk]},\by^{\bar{\jmath}(r)}_{[rk]},\tz,\bbet_{k\bar{\jmath}}^{(r)},\bG_k)$.
\begin{algo}
\begin{align}
\bbet_{kj}^{\star}& \sim q_2(\bbet_{kj}|\tx,\tz)\\
\bbet_{kj}^{(r)}&=\left\{ \begin{array}{rl}
\bbet_{kj}^{\star} & \text{ with probability } \rho_2^{(r)} \\
\bbet_{kj}^{(r-1)}& \text{ with probability } 1-\rho_2^{(r)}.
\end{array}\right.
\end{align}
\end{algo}

Vector $\by_{[rk]}^{j(r)}$ is easily sampled after $\bbet_{kj}^{(r)}$. Indeed, independence between the individuals defines the full conditional distribution of $\by_{[rk]}^j$ by
\begin{equation}
p(\by_{[rk]}^j|\tx,\by^{\bar{\jmath}(r)}_{[rk]},\tz^{(r)},\bbet_{k\bar{\jmath}}^{(r)},\bbet_{kj},\bG_k^{(r-1)})=
\prod_{\{i: z_i^{(r)}=k\}}
p(y_{i}^j|x_i^j,\by_{i}^{\bar{\jmath}(r)},z_i^{(r)},\bbet_{kj},\bG_k^{(r-1)}).
\end{equation}
If $x^j$ is a continuous variable (\emph{i.e.} $1\leq j \leq c$), when $z_i^{(r)}=k$, the full conditional distribution of $y_i^j$ is a Dirac distribution in $\frac{x_i^j-\mu_{kj}^{(r)}}{\sigma_{kj}^{(r)}}$. If $x^j$ is a discrete variable (\emph{i.e.} $c+1\leq j \leq e$), when $z_i^{(r)}=k$, the full conditional distribution of $y_i^j$ is a truncated Gaussian distribution such as,
\begin{equation}
p(y_{i}^j|x_i^j,\by_{i}^{\bar{\jmath}(r)},z_i^{(r)},\bbet_{kj}^{(r)},\bG_k^{(r-1)})=
\frac{\phi_1(y_i^j;\tilde{\mu}_i,\tilde{\sigma}_i^2)}{p(x_i^j;\bbet_{kj}^{(r)})}\mathds{1}_{\{y_i^j \in [b_k^{\ominus (r)}(x_i^j),b_k^{\oplus (r)}(x_i^j)]\}},
 \label{y_sampling}
\end{equation}

So, step \eqref{margins} is performed in two steps. First, $\bbet_{kj}^{(r)}$  is sampled via one iteration of the Metropolis-Hastings algorithm whose the stationary distribution is $p(\bbet_{kj}|\tx,\by^{\bar{\jmath}(r)}_{[rk]},\tz^{(r)},\bbet_{k\bar{\jmath}}^{(r)},\bG_k)$. Second,  $\by_{[rk]}^{j(r)}$ is sampled from \eqref{y_sampling}.

\subsection{Proportion vector sampling}
The aim is to sample from \eqref{proportions}. The sampling from \eqref{proportions} is classical. Indeed, the conjugate Jeffreys non informative prior involves that
\begin{equation}
\bpi|\tz^{(r)} \sim \mathcal{D}_g\left(n_1^{(r)}+\frac{1}{2},\ldots,n_g^{(r)}+\frac{1}{2}\right),
\end{equation}
where $n_k^{(r)} =\sum_{i=1}^n \mathds{1}_{\{z_i^{(r)}=k\}}$.

\subsection{Correlation matrix sampling}
The aim is to sample from \eqref{correlations}. To sample from \eqref{correlations}, we use the approach proposed by \citet{Hof07} in the case of semiparameteric Gaussian copula. First, a covariance matrix is generated by its explicit posterior distribution, and second, the correlation matrix is deduced by normalizing the covariance matrix. As $(\ty,\tz)$ are known in this step, we are in the well-known case of a multivariate Gaussian mixture model with known means. Thus, the sampling according to $\bG_k | \ty^{(r)},\tz^{(r)} $ is performed by the following two steps
\begin{equation}
\boldsymbol{\Lambda}_k| \ty^{(r)},\tz^{(r)} \sim \mathcal{W}^{-1}\left(s_0+n_k^{(r-1)},S_0+ \sum_{ \{i: z_i^{(r)}=k\} } \by_i^{(r)T} \by_i^{(r)}\right), \label{lamda}
\end{equation}
where $\forall 1\leq h,\ell\leq e,\;\bG_k[h,\ell]=\frac{\boldsymbol{\Lambda}_k[h,\ell]}{\sqrt{\boldsymbol{\Lambda}_k[h,h]\boldsymbol{\Lambda}_k[\ell,\ell]}}.$
As the homoscedastic model assumes the equality between the correlation matrices, in such a case we only sample one $\boldsymbol{\Lambda}$ so \eqref{lamda} is replaced by
\begin{equation}
\boldsymbol{\Lambda}| \ty^{(r)},\tz^{(r)} \sim \mathcal{W}^{-1}\left(s_0+n,S_0+ \sum_{i=1}^{n} \by_i^{(r)T} \by_i^{(r)}\right),
\end{equation}
and we put $\boldsymbol{\Lambda}_k=\boldsymbol{\Lambda}$ for $k=1,\ldots,g$.

\end{document}